\documentclass[12pt]{article}
\usepackage{amsmath, amsthm, amssymb}
\usepackage{enumerate}
\usepackage{endnotes}
\usepackage{verbatim} 

\hyphenation{quasi-random}

\newtheorem{theorem}{Theorem}
\newtheorem{lemma}[theorem]{Lemma}
\newtheorem{claim}[theorem]{Claim}
\newtheorem{defi}[theorem]{Definition}
\newtheorem{cor}[theorem]{Corollary}

\newcommand{\e}{\varepsilon}
\newcommand{\eps}{\varepsilon}

\DeclareMathOperator{\diam}{diam}
\DeclareMathOperator{\E}{E}

\newcommand{\oN}{{\overline{N}}}
\newcommand{\N}{{\mathbb{N}}}

\newcommand{\NN}{{\mathbb{N}}}

\newcommand{\RR}{{\mathbb{R}}}

\newcommand{\menge}[2]{\left\{{#1}\ \left| \ {#2}\right.\right\}}
\newcommand{\thmref}[1]{Theorem~\ref{thm:#1}}
\newcommand{\clref}[1]{Claim~\ref{cl:#1}}

\newcommand*\samethanks[1][\value{footnote}]{\footnotemark[#1]}

\pagestyle{plain}

\begin{document}

\title{Strong Robustness of Randomized\\ Rumor Spreading Protocols\thanks{A short version of this work appeared in the proceedings of the 20th International Symposium on Algorithms and Computation (ISAAC 2009).}}

\author{Benjamin Doerr\thanks{Partially supported by the German Science
Foundation (DFG) via its priority program ``SPP 1307: Algorithm Engineering'', grant DO 479/4-2.}\\
Max-Planck-Institut f\"ur Informatik\\
Saarbr\"ucken, Germany\\
\and Anna Huber\thanks{Work done while the authors were with the Max-Planck-Institut f\"ur Informatik.}\\
Durham University, United Kingdom\\
\and Ariel Levavi\samethanks\hspace{6pt}\thanks{Supported by the Study Scholarship awarded by the German Academic Exchange Service (DAAD).}\\
University of California, San Diego\\
}
\date{}

\maketitle

\begin{abstract}
  Randomized rumor spreading is a classical protocol to disseminate information across a network. At SODA 2008, a quasirandom version of this protocol was proposed and competitive bounds for its run-time were proven. This prompts the question: to what extent does the quasirandom protocol inherit the second principal advantage of randomized rumor spreading, namely robustness against transmission failures? 
  
In this paper, we present a result precise up to $(1 \pm o(1))$ factors. We limit ourselves to the network in which every two vertices are connected by a direct link.  Run-times accurate to their leading constants are unknown for all other non-trivial networks. 
  
  We show that if each transmission reaches its destination with a probability of $p \in (0,1]$, after $(1+\e)\left(\frac{1}{\log_2(1+p)}\log_2n+\frac{1}{p}\ln n\right)$ rounds the quasirandom protocol has informed all $n$ nodes in the network with probability at least $1-n^{-p\e/40}$. Note that this is faster than the intuitively natural $1/p$ factor increase over the run-time of approximately $\log_2 n  + \ln n $ for the non-corrupted case. 
  
  We also provide a corresponding lower bound for the classical model. This demonstrates that the quasirandom model is at least as robust as the fully random model despite the greatly reduced degree of independent randomness. 
\end{abstract}

{\sloppy
\section{Introduction}
  Disseminating information in a network, that is, making information that is known to a single node available to all other nodes, is a classical problem. A simple, yet powerful approach is \emph{randomized rumor spreading}, also known as \emph{random phone calls}. In this setting, each node that is already informed participates in the dissemination process by  randomly calling neighbors and passing along copies of the information. Besides being self-organized, this approach has two crucial advantages. (i) It is fast. For many important network topologies, $O(\log n)$ rounds suffice to inform all $n$ nodes with high probability. (ii) It is robust against transmission failure. Often, a constant fraction of independently chosen transmission failures does not cause serious problems, but merely increases the time required by a constant factor. 
  
  Success of the basic randomized rumor spreading protocol has motivated the study of several variants. In \cite{DFS08}, a quasirandom version of the protocol was proposed. This version is structurally simpler, uses less randomness, and is especially beneficial in that each vertex contacts each neighbor at most once. Nonetheless, most run-time guarantees known for the classical model still hold for the quasirandom version, some in an even stronger form. However, little was previously known about the robustness of this model. 
  
  In this paper, we offer a detailed investigation of the robustness of the quasirandom protocol. We use the following model of lossy communication, which was analyzed in \cite{MR2239598}. We assume that each message reaches its target with a certain probability $p \in (0,1)$ independently for all transmissions. For networks in which every two nodes are connected by a link, we demonstrate that this lossiness only increases the run-time by a small constant factor, which we precisely determine for each value of $p$. Surprisingly, it is smaller than $1/p$. For example, for $p = 1/2$, the run-time increases by a factor of $1.828$ only. The same result also holds for the classical, fully random model.
  
  In addition, we show that the corresponding slow-down for the classical model is at least this factor. From this we conclude that the quasirandom model is at least as robust as the classical model. 
  
  This is the first time, for the classical as well as for the quasirandom model, that results precise up to the leading constant are shown for robustness. 
  
\subsection{Randomized Rumor Spreading}

  The classical, fully random \emph{randomized rumor spreading} protocol was first investigated by Frieze and Grimmett~\cite{FG85}. They proposed the following model. Let $G = (V, E)$ be an undirected graph. At the start of the protocol, a single vertex $s \in V$ knows a piece of information that is to be disseminated to all other vertices. We say that $s$ is \emph{informed}. The protocol proceeds in rounds (hence it assumes a common clock). In each round, every informed vertex $v$ chooses a neighbor $u_v \in N(v) := \{u \in V \mid \{u,v\} \in E\}$ uniformly at random and sends a copy of the information to it. This results in $u_v$ becoming informed, if it is not already, and in $u_v$ participating in the dissemination process in subsequent rounds. This process defines a random variable $T_s$, which denotes the number of rounds after which all vertices in the network are informed, assuming that the initially informed vertex is $s$. The \emph{broadcast time} $T$ is then defined as the maximum of all $T_s$, $s \in V$, where the $T_s$ are defined over independent probability spaces.
  
  Frieze and Grimmett demonstrate that if the network is a complete graph on $n$ vertices, the broadcast time satisfies $T = (1 \pm o(1)) (\log_2 n + \ln n)$ with probability $1 - o(1)$. For hypercubes and random graphs $G(n,p)$, where $p \ge (1 + \eps) \ln(n)/n$, Feige, Peleg, Raghavan and Upfal~\cite{FPRU90} also determine a broadcast time of $\Theta(\log n)$ with probability $1 - 1/n$, albeit without making the implicit constant precise. They also provide the general bounds of $12 n \log n$ and $O(\Delta(G) (\diam(G)+ \log n))$ for arbitrary $n$-vertex graphs. Subsequent work by Sauerwald~\cite{S07}, Els\"asser and Sauerwald~\cite{ES07} and Berenbrink, Els\"asser and Friedetzky~\cite{BEF08} shows that the $\Theta(\log n)$ bound also holds for expander graphs, Cayley graphs and random regular graphs.
  
  We shall not extensively discuss the practical side of the randomized rumor spreading protocol. We refer the interested reader to the aforementioned paper~\cite{FPRU90} as well as the paper by Karp, Shenker, Schindelhauer and V\"ocking~\cite{KSSV00} for a general discussion, or the works of Demers et al.~\cite{DGH+88} and Kempe, Dobra and Gehrke~\cite{KDG03} for particular applications. What are generally recognized as the three key advantages of randomized rumor spreading are speed (logarithmic broadcast time on important network topologies); self-organization (there is no central authority involved); and robustness against transmission failure. Contrary to broadcast times, significantly less work has been done to quantify the robustness of the randomized rumor spreading protocol.

    As far as we are aware, the only results on the robustness of randomized rumor spreading are due to Els\"asser and Sauerwald~\cite{ES06full}. They consider the model where each transmission does not reach its destination with failure probability $1-p$, and failures in different rounds are independent of each other. They assert that the broadcast time for all graphs in this lossy model is at most a factor of $O(1/p)$ larger than in the model without transmission failures.
    
\subsection{Quasirandom Rumor Spreading}

  The above results show that randomized rumor spreading is a very powerful approach to dissemination problems. However, taking all decisions independently at random also has some unwanted effects. For example, a vertex may contact one of its neighbors twice before contacting all of its other neighbors. This may only be a minor problem for dense graphs like the complete graph, but for sparse graphs, it may increase the broadcast time significantly. 
  
  Let $G$ be a star on $n$ vertices, i.e., a graph with one central vertex such that all other vertices have this vertex as their only neighbor. Clearly, any dissemination process where each vertex can send out at most one transmission per round needs at least $n-1$ rounds, simply because the central vertex has $n-1$ neighbors that cannot be informed by other vertices. However, due to the coupon collector effect, randomized rumor spreading on the star needs $\Theta(n \log n)$ rounds with probability $1 - o(1)$. 
  
  This type of imbalance in informing one's neighbors may be avoided by choosing the destination of the current transmission uniformly at random from those neighbors which have not yet been contacted by the originating vertex. However, this requires tracking all previously sent messages and is therefore less desirable. Motivated by the paradigm of quasirandomness, Friedrich, Sauerwald and the first author~\cite{DFS08} suggest the following \emph{quasirandom rumor spreading} protocol.
  
  In this model, each vertex is equipped with a cyclic permutation of its neighbors. As before, the protocol proceeds in rounds, and all informed vertices participate in the dissemination process. However, each vertex only directs its first transmission to a random neighbor. Subsequently, it informs the successors of the first addressee on its list. We shall not make any assumptions about the structures of these cyclic lists.

Before analyzing the quasirandom protocol, let us discuss it from an implementation
point of view. From a theory perspective, we immediately note that the quasirandom model requires each
vertex to store the permutation of its neighbors, which may utilize up to $\Theta(n \log n)$ bits.
This is not necessary for the fully random model. However, we may assume that in
most networks each vertex already has some list or array of its neighbors, since the
information regarding how to contact a neighbor must be stored somewhere. In this scenario, the use of the lists does not increase the complexity. Rather, it appears that the
quasirandom protocol needs less resources. In particular, it requires significantly fewer random
bits. This is beneficial if we consider randomness costly, and useful if we want to trace an
actual run of the protocol.

The core question to be answered is whether the quasirandom protocol
works well even if we are not permitted to design the lists. Surprisingly, the answer
is yes.

For all lists that can be present at each vertex, $O(\log n)$ rounds suffice with high probability to inform all the vertices of a complete graph $K_n$, a hypercube $Q_n$, an expander graph on $n$ vertices (some extra conditions are needed here), or a random graph $G(n,p)$ with $p \ge (1 + \eps) (\ln n)/n$~\cite{DFS08,DFS09}. Naturally, the lower bound of $\log_2n$ rounds valid for the fully random model also holds for the quasirandom model. Once again these bounds fall within the right order of magnitude. 

Sharper bounds analogous to those defined by Frieze and Grimmett are known for the complete graph.
In~\cite{TightBounds}, it is shown that with probability $1 - o(1)$, the number of rounds needed to inform all vertices is $(1 \pm o(1))(\log_2 n  + \ln n )$. 

In some settings, we observe better broadcast times than in the classical model. One example is the random graph with edge probability $p = (\ln n + \omega(1))/n$ only minimally above the connectivity threshold. Nevertheless, with probability $1 - o(1)$, the random graph is such that with high probability the quasirandom protocol needs only $O(\log n)$ rounds independent of the starting point. This is a notable advantage over the fully random model. Feige et al.~\cite{FPRU90} demonstrate that for $p = (\ln n  + O(\log\log n))/n$, the random graph with probability $1 - o(1)$ is such that $\Theta(\log^2 n )$ rounds are necessary to spread the rumor with high probability. 

The bounds obtained for arbitrary graphs are also superior for the quasirandom model. For the fully random model, we saw above that $12 n \ln n$ and $O(\Delta(G) (\diam(G) + \log n ))$ rounds suffice to inform all vertices of an $n$-vertex graph $G$ with high probability~\cite{FPRU90}. For the quasirandom model, it is easily proven that after $2n -3$ or $\Delta(G) \diam(G)$ rounds, all vertices are informed with probability one. 

\subsection{Robustness of the Quasirandom Protocol}

The above results show that the broadcast time of the quasirandom rumor spreading protocol is quite well understood. Together with the experimental investigation~\cite{DFKS09}, all results indicate that the quasirandom protocol achieves comparable or better broadcast times than the random model. For the equally important aspect of robustness, much less is known. Since it would typically seem that robustness of randomized algorithms is caused by the large number of independent random decisions taken by the algorithm, one may conclude that the quasirandom protocol is less robust. 

The experimental evaluation in~\cite{DFKS09} debunks this assertion. For both the hypercube and the complete graph on $2^{12}$ vertices, it was observed that if messages sent across the network using either protocol get lost with probability $\frac 12$, the broadcast time increases by a factor of between $1.8$ and $1.9$. 

The only theoretical result pertaining to robustness is the one in~\cite{DFS09}. Let $G$ be a graph, $T \in \N$ and $\gamma \ge 1$ such that the quasirandom protocol independent of the starting vertex with probability $1 - n^{-\gamma}$ succeeds in informing all other nodes within $T$ rounds. Then in the presence of transmission failures (independently chosen with probability $1-p$), independent of the starting vertex, $4 \gamma (1/p) T$ rounds of a modified quasirandom protocol suffice to inform all vertices with probability $1 - 2n^{-\gamma}$. The modification of the protocol needed to prove this result is that the recipient of a message returns a feedback message to the sender (which also gets lost with probability $p$). Whenever the sender does not receive a feedback message, he tries to reach the same addressee in the next round. With this modification, however, the result is slightly weaker, in particular, because the feedback modification makes the protocol significantly less simple.

In addition, the robustness result in~\cite{DFS09} leaves room for constant factor differences between the random and the quasirandom models in the presence of transmission faults. 

To gain a deeper understanding, we study the robustness of quasirandom rumor spreading (without the feedback modification) on the complete graph in this work. Recall that the complete graph is the only graph for which a broadcast time of one of the two models precise up to 
the leading constant is published. We show the following main result.

\emph{Main Result:} For all $\eps > 0$ and $p \in (0,1]$ the quasirandom rumor spreading protocol with arbitrary lists, despite independent message losses occurring with probability $1-p$, succeeds with probability at least $1 - n^{-p\e/40}$ in  informing all other vertices from a given vertex in time at most $(1 + \eps)(\log_{1+p} n + \frac 1p \ln n)$.

This result is interesting for two reasons. Firstly, it shows that the quasirandom protocol is even more robust than previous results indicate. Note that the above bound is strictly better than $(1/p)(1+o(1))(\log_2 n + \ln n)$, that is, $(1/p)$ times the bound for the case without faulty transmissions.

Secondly, our results imply that the quasirandom protocol is at least as robust as the classical one. To prove this, we show a corresponding lower bound for the fully random protocol. 

We should add that our proof for the upper bound of the quasirandom model can be modified to yield a corresponding proof for the classical protocol. This is the first bound to make the robustness of the classical protocol precise up to the leading constant. 

\subsection{Organization of the Paper}
In Section \ref{sec:tools}, we collect some
well-known probabilistic tools we require for the remainder of the
paper. We proceed in Section 3 by analyzing the randomized rumor spreading model, as this is shorter and easier and may serve as an introduction to the techniques we use. We provide a lower bound on the broadcast time. We then analyze the quasirandom version of this model in Section 4. This is the main part of our proof. Our goal is to show that the rumor is disseminated in the quasirandom model at least as quickly as it is in the random model, and so in this section we will focus on determining an upper bound on the broadcast time.

\section{Probabilistic Tools}\label{sec:tools}
A basic tool that we will use in the following proofs is the Chernoff bound~\cite{Chernoff}. 
This provides exponentially small bounds for the probability that a binomially 
distributed random variable deviates significantly from its expected value. 
This classical result can be found for example in~\cite{MU05} in the following form.
\begin{theorem}[Chernoff Bounds]
\label{thm:chernoff}
Let $X_1, \dots ,X_n$ be independent random variables, taking
values in $\{0, 1\}$. Let $\displaystyle X := \sum_{i=1}^n X_i$ and let $\delta \in (0,1]$. Then 
$$ \Pr\left(X \leq (1-\delta)\E (X)\right)\leq e^{-\delta^2\E (X)/2},$$
and
$$ \Pr\left(X \geq (1+\delta)\E (X)\right)\leq e^{-\delta^2\E (X)/3}.$$
\end{theorem}
For random variables which take more than two values, but are still independent and bounded, Theorem 2 from \cite{Hoe63} yields directly the following upper and lower tail bounds.
\begin{theorem}[Hoeffding Bounds]\label{thm:Hoeffding}
Let $X_1, \dots ,X_n$ be independent random variables, and for every $i \in [n] := \NN_{\leq n}$ let $a_i, b_i \in \RR$ such that $0 \leq a_i < b_i$ and $X_i$ takes
values in $[a_i, b_i]$. Let $\displaystyle X := \sum_{i=1}^n X_i$ and let $\delta > 0$. Then 
$$ \Pr\left(X \leq (1-\delta)\E (X)\right)\leq e^{-\frac{2\delta^2\E (X)^2}{\sum_{i=1}^n(b_i-a_i)^2}},$$
 and
$$ \Pr\left(X \geq (1+\delta)\E (X)\right)\leq e^{-\frac{2\delta^2\E (X)^2}{\sum_{i=1}^n(b_i-a_i)^2}}.$$
\end{theorem}
There are places where we would like to use Chernoff bounds, but we do not have independence of the random variables. Here results of Panconesi and Srinivasan~\cite{PS97,SrinFOCS2001} show that we may use the classical Chernoff bounds even under the more general assumption that the random variables are negatively correlated. This is defined as follows.

\begin{defi}
The random variables $X_1, \dots ,X_n$, taking
values in $\{0, 1\}$, are called {\emph{negatively correlated}}, if for every subset $I \subseteq \{1,\dots , n\}$ we have 
$$\Pr{\left(\bigwedge_{i \in I}X_i = 1\right)} \leq \prod_{i \in I}\Pr{\left(X_i = 1\right)},$$
and
$$\Pr{\left(\bigwedge_{i \in I}X_i = 0\right)} \leq \prod_{i \in I}\Pr{\left(X_i = 0\right)}.$$
\end{defi}
The results of Panconesi and Srinivasan~\cite{PS97,SrinFOCS2001} yield the following lemma.

\begin{lemma}[Chernoff Bounds for negatively correlated random variables]
\label{thm:chernoffnc}
Let $X_1, \dots ,X_n$ be negatively correlated random variables, taking
values in $\{0, 1\}$. Let $X := \sum_{i=1}^n X_i$ and let $\delta \in (0,1]$. Then 
$$ \Pr \left(X \leq (1-\delta)\E (X)\right)\leq e^{-\delta^2\E (X)/2},$$
and
$$ \Pr \left(X \geq (1+\delta)\E (X)\right)\leq e^{-\delta^2\E (X)/3}.$$
\end{lemma}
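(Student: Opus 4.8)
The plan is to reduce both tail bounds to the corresponding statements for independent $\{0,1\}$-variables by showing that negative correlation already suffices to push the exponential moments below the product of the individual exponential moments. This factorization of the moment generating function is the only place where independence is genuinely used in the textbook proof of Lemma~\ref{thm:chernoff}, so once it is re-established the rest of the argument carries over unchanged.

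First, for the upper tail I would fix $t > 0$ and expand the moment generating function using $X_i \in \{0,1\}$, so that $e^{tX_i} = 1 + (e^t-1)X_i$ and hence $e^{tX} = \prod_{i=1}^n\bigl(1 + (e^t-1)X_i\bigr)$. Multiplying out and using $X_i^2 = X_i$, so that $\prod_{i\in I}X_i = \mathbf{1}[\bigwedge_{i\in I}X_i = 1]$, and taking expectations gives $\E(e^{tX}) = \sum_{I}(e^t-1)^{|I|}\,\Pr(\bigwedge_{i\in I}X_i=1)$, the sum ranging over all $I \subseteq \{1,\dots,n\}$. Since $e^t-1 > 0$, every coefficient is nonnegative, so the first negative-correlation inequality applies term by term and yields $\E(e^{tX}) \leq \sum_{I}(e^t-1)^{|I|}\prod_{i\in I}\Pr(X_i=1) = \prod_{i=1}^n\E(e^{tX_i})$, exactly the bound independence would have supplied.

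For the lower tail the same expansion produces an \emph{alternating} sum, because the coefficients $(e^{-t}-1)^{|I|}$ change sign with the parity of $|I|$, so the one-sided negative-correlation inequality cannot be applied directly; I expect this to be the main obstacle. The fix is to pass to the complementary indicators $Y_i := 1 - X_i \in \{0,1\}$. The second negative-correlation inequality states precisely that the $Y_i$ satisfy the AND-of-ones condition, since $\Pr(\bigwedge_{i\in I}Y_i=1) = \Pr(\bigwedge_{i\in I}X_i=0) \leq \prod_{i\in I}\Pr(X_i=0) = \prod_{i\in I}\Pr(Y_i=1)$. Hence the previous paragraph applies to the $Y_i$ and gives $\E(e^{tY}) \leq \prod_i \E(e^{tY_i})$ for $t>0$. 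Writing $\sum_i Y_i = n - X$ and substituting, this is equivalent to $\E(e^{-tX}) \leq \prod_{i=1}^n\E(e^{-tX_i})$, the desired product bound for the lower tail.

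With both product bounds in hand, the remainder is the classical Chernoff computation verbatim. Using $\E(e^{tX_i}) = 1 + \E(X_i)(e^t-1) \leq e^{\E(X_i)(e^t-1)}$ gives $\E(e^{tX}) \leq e^{(e^t-1)\E(X)}$; combining this with Markov's inequality applied to $e^{tX}$ and optimizing at $t = \ln(1+\delta)$ yields $\Pr(X \geq (1+\delta)\E(X)) \leq \bigl(e^{\delta}(1+\delta)^{-(1+\delta)}\bigr)^{\E(X)}$, which is at most $e^{-\delta^2\E(X)/3}$ for $\delta \in (0,1]$. The symmetric computation with $e^{-tX}$ and $t = -\ln(1-\delta)$ produces the lower-tail bound $e^{-\delta^2\E(X)/2}$. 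Since these final estimates make no further use of the dependence structure, they coincide with the independent case, completing the proof.
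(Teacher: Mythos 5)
Your proof is correct, but it is worth noting how it relates to the paper: the paper does not prove this lemma at all, it simply derives it by citing Panconesi and Srinivasan~\cite{PS97,SrinFOCS2001}. Your argument is a self-contained reconstruction of exactly the mechanism underlying those references: the only place independence enters the classical Chernoff argument is the factorization $\E(e^{tX}) \leq \prod_{i=1}^n \E(e^{tX_i})$, and for $\{0,1\}$-valued variables this factorization follows from negative correlation alone, by expanding $e^{tX} = \prod_{i=1}^n\bigl(1+(e^t-1)X_i\bigr)$ into a sum whose coefficients $(e^t-1)^{|I|}$ are nonnegative for $t>0$ and applying the first correlation inequality term by term. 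Your handling of the lower tail via the complementary indicators $Y_i = 1-X_i$ is also the right move, and it makes transparent why the definition requires \emph{both} the AND-of-ones and the AND-of-zeros inequalities: the first alone would not survive the sign alternation in the direct expansion of $\E(e^{-tX})$. The one point deserving a remark is the boundary case $\delta = 1$ of the lower tail, where your substitution $t = -\ln(1-\delta)$ degenerates; this is repaired either by letting $\delta \to 1^-$ and using that $\{X \leq 0\} \subseteq \{X \leq (1-\delta)\E(X)\}$ for every $\delta < 1$, or by bounding $\Pr(X = 0) = \Pr\bigl(\bigwedge_{i} X_i = 0\bigr) \leq \prod_{i}\bigl(1-\E(X_i)\bigr) \leq e^{-\E(X)}$ directly from the second correlation condition. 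What your write-up buys over the paper's bare citation is a short, verifiable, elementary proof; it costs essentially nothing, since the whole argument fits in half a page.
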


Later in this paper we will apply the inequality by Azuma \cite{Azuma}. Intuitively, it provides strong bounds on the probability that a function defined on a set of independent random variables deviates significantly from its expectation, when the value of the function is affected only slightly by changes to only one of its arguments. 
We will use it in the following version, stated in \cite[Lemma 1.2]{mcdiarmid89}.

\begin{lemma}[Azuma-Inequality]
\label{thm:azuma} Let $X_1, \dots ,X_n$ be independent random variables, with $X_i$ taking
values in a set $\Omega_i$ for each $i$. Suppose that the (measurable) function $f:\prod_{i=1}^n\Omega_i\rightarrow\RR$ satisfies
\begin{equation*}
|f(x) - f(x')| \leq c_i
\end{equation*}
whenever the vectors $x$ and $x'$ differ only in the $i$th coordinate. Let $Y$ be the random variable
$f(X_1, \dots ,X_n)$. Then for any $t > 0$,
\begin{equation*}
 \Pr (|Y - \E(Y)| \geq t) \leq 2 \exp\left(-\frac{2t^2}{\sum_{i=1}^nc_i^2}	\right).
\end{equation*}
\end{lemma}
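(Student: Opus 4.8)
The plan is to prove this by the standard Doob martingale argument combined with the exponential moment (Chernoff) method. First I would introduce the Doob martingale associated with $f$: for $k = 0, 1, \dots, n$ set $Y_k := \E(f(X_1, \dots, X_n) \mid X_1, \dots, X_k)$, so that $Y_0 = \E(Y)$ is constant and $Y_n = Y$. Writing $D_k := Y_k - Y_{k-1}$ for the martingale differences, we have $Y - \E(Y) = \sum_{k=1}^n D_k$, and each $D_k$ satisfies $\E(D_k \mid X_1, \dots, X_{k-1}) = 0$ by the tower property.

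The crucial step is to show that, conditioned on $X_1, \dots, X_{k-1}$, the variable $D_k$ lies in an interval of length at most $c_k$. Because the $X_i$ are independent, the conditional expectation defining $Y_k$ integrates out $X_{k+1}, \dots, X_n$ against a product measure that does not depend on the value of $X_k$; replacing $X_k$ by another value therefore changes $Y_k$ only through a term bounded by $c_k$ in absolute value, by the bounded-differences hypothesis on $f$. This yields $\sigma(X_1, \dots, X_{k-1})$-measurable bounds $L_k \le D_k \le U_k$ with $U_k - L_k \le c_k$. Hoeffding's lemma then gives, for every $\lambda > 0$,
$$\E\!\left(e^{\lambda D_k} \,\middle|\, X_1, \dots, X_{k-1}\right) \le \exp\!\left(\lambda^2 c_k^2 / 8\right).$$

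I would then bound the moment generating function of $Y - \E(Y)$ by peeling off the differences one at a time: conditioning on $X_1, \dots, X_{n-1}$ and pulling out the factor $e^{\lambda D_n}$ contributes $e^{\lambda^2 c_n^2/8}$, and iterating yields $\E(e^{\lambda(Y - \E(Y))}) \le \exp(\lambda^2 \sum_{k=1}^n c_k^2 / 8)$. Markov's inequality applied to $e^{\lambda(Y - \E(Y))}$ gives $\Pr(Y - \E(Y) \ge t) \le \exp(\lambda^2 \sum_{k=1}^n c_k^2/8 - \lambda t)$, and optimizing over $\lambda$ (choosing $\lambda = 4t / \sum_{k=1}^n c_k^2$) produces the one-sided bound $\exp(-2t^2 / \sum_{k=1}^n c_k^2)$. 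Applying the identical argument to $-f$ controls the lower tail, and a union bound over the two events introduces the factor of $2$.

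The main obstacle I anticipate is the conditional range bound in the second step: making rigorous the claim that $Y_k$ depends on $X_k$ only through $f$ and not through the measure over the remaining coordinates uses the independence of the $X_i$ in an essential way (a Fubini-type factorization of the product measure), and one must verify that the bounding quantities $L_k, U_k$ are themselves $\sigma(X_1, \dots, X_{k-1})$-measurable so that Hoeffding's lemma may be applied conditionally. Once this is settled, the remaining steps are routine Chernoff-style manipulations.
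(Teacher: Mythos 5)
Your proof is correct: the Doob martingale decomposition, the conditional range bound via independence, Hoeffding's lemma, the exponential Markov bound with $\lambda = 4t/\sum_{k}c_k^2$, and the union bound over both tails together give exactly the stated inequality. The paper does not prove this lemma itself but cites it from McDiarmid's survey (Lemma 1.2 there), and your argument is precisely the standard bounded-differences proof given in that reference, so there is nothing to reconcile.
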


\section{Lower Bound for Randomized Rumor Spreading}\label{sec:lower}

In this section, we analyze the classical (fully random) rumor spreading model in which each informed node randomly chooses a neighbor to inform at the beginning of each round, but only makes successful contact with probability~$p \in (0,1]$. We prove the following lower bound for the broadcast time.

\begin{theorem}
 Let $\e>0$ and~$p \in (0,1]$. With probability $1-e^{-\Omega(n^{\e/6})}$, the number of rounds we need to inform all the nodes of the complete graph on $n$ vertices using the random rumor spreading protocol with message success probability $p$ is at least

\begin{equation*}
(1-\e)\left(\log_{1+p}n+\tfrac{1}{p}\ln n\right).
\end{equation*}

\end{theorem}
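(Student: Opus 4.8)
The plan is to split the $T := (1-\e)\!\left(\log_{1+p}n+\tfrac1p\ln n\right)$ rounds into a \emph{growth phase} of length $t_1 := (1-\e)\log_{1+p}n$ and a \emph{collection phase} of length $s := (1-\e)\tfrac1p\ln n$, and then to exhibit, with the claimed probability, a single vertex that is still uninformed after all $t_1+s$ rounds. Writing $I_t$ for the number of informed vertices at the start of round $t$, the driving observation is that in one round each informed vertex sends exactly one message, successful with probability $p$, so the expected number of newly informed vertices is at most $pI_t$, whence $\E[I_{t+1}\mid I_t]\le(1+p)I_t$ and $\E[I_{t_1}]\le(1+p)^{t_1}=n^{1-\e}$. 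The first goal is therefore to show that at the end of the growth phase at least half of the vertices are still uninformed, i.e.\ $I_{t_1}\le n/2$.

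To turn this expectation bound into a high-probability statement I would dominate $I_t$ by the generation sizes of a Galton--Watson process with offspring distribution $1+\mathrm{Bernoulli}(p)$ (each informed vertex persists and buds a new one with probability $p$, while collisions and hits on already-informed vertices only decrease $I_t$). I expect this to be the main obstacle: a supercritical branching process does \emph{not} concentrate around its mean, since $I_{t_1}/(1+p)^{t_1}$ retains fluctuations of constant relative order, so one cannot aim for sharp concentration and must settle for the crude threshold $n/2$. I would obtain $\Pr[I_{t_1}>n/2]\le e^{-\Omega(n^{\e/6})}$ either by controlling the moment generating function of the dominating process, or by revealing the process round by round and applying the Chernoff bound of Lemma~\ref{thm:chernoff} to the binomially distributed number of successful transmissions in each round, taking care of the first few rounds where $I_t$ is too small for the tail bounds to bite. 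It is this loose step that produces the comparatively weak exponent $n^{\e/6}$.

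Conditioning now on the set $U$ of vertices uninformed after the growth phase, with $|U|\ge n/2$, I would run the collection phase and, for each $u\in U$, let $X_u$ be the indicator that $u$ receives no successful transmission during these $s$ rounds. The key device is a coupling with the auxiliary process in which \emph{every} vertex sends in \emph{every} round: since the real process only ever delivers a subset of those messages, $u$ survives whenever it is unhit in the auxiliary process, where the per-round hitting probability is governed by all $n-1$ potential senders. This gives $\Pr[X_u=1]\ge\bigl(1-\tfrac{p}{n-1}\bigr)^{(n-1)s}=(1\pm o(1))\,n^{-(1-\e)}$ for \emph{every} realization of $U$, so $Z:=\sum_{u\in U}X_u$ satisfies $\E[Z]\ge(1\pm o(1))\tfrac12 n^{\e}\to\infty$; crucially this lower bound is insensitive to the growth-phase history, so the conditioning is harmless.

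Finally I would argue that the indicators $\{X_u\}$ are negatively correlated: the emptiness indicators of the underlying balls-into-bins experiment (successful messages thrown into target vertices) are negatively associated, and the required inequality $\Pr[\bigwedge_{u\in I}X_u=1]\le\prod_{u\in I}\Pr[X_u=1]$ follows from the elementary bound $(1-1/N)^{k}\ge 1-k/N$. Hence the Chernoff bound for negatively correlated variables (Lemma~\ref{thm:chernoffnc}) applies, and with $\delta=\tfrac12$ it yields $Z\ge\tfrac12\E[Z]\ge1$ except with probability $e^{-\Omega(n^{\e})}$, so some vertex of $U$ remains uninformed after round $t_1+s=T$. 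A union bound over the two phases' failure events then gives the overall failure probability $e^{-\Omega(n^{\e/6})}$, dominated by the growth phase, and since a vertex survives all $T$ rounds the broadcast time is at least $T=(1-\e)\bigl(\log_{1+p}n+\tfrac1p\ln n\bigr)$.
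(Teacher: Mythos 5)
Your proposal rests on the same two pillars as the paper's proof: a round\-/by\-/round Chernoff bound on the (binomially distributed) number of successful transmissions, which pins the per-round growth factor at $(1+o(1))(1+p)$, and the Chernoff bound for negatively correlated variables (Lemma~\ref{thm:chernoffnc}) applied to the indicators that individual vertices survive the last $(1-\e)\tfrac1p\ln n$ rounds; your collection phase is essentially the paper's Phase~3 verbatim. The genuine difference is the bookkeeping, and it matters. You fix the two window lengths in advance and bound $I_t$ after each window, starting the clock at round $0$; the paper instead defines its phases by thresholds on the informed count --- Phase~2 begins at the first time $|I_t|\ge n^{\e/2}$ (at which moment necessarily $|I_t|<2n^{\e/2}$), Phase~3 at the first time $|I_t|\ge n/4$ --- and it lower-bounds only the \emph{durations} of Phases~2 and~3, discarding the initial phase entirely. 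What this accounting buys is precisely the point you flag as your main obstacle: in every round the paper ever analyses, $|I_t|\ge n^{\e/2}$, so each round's Chernoff failure probability is $\exp(-\Omega(pn^{\e/6}))$ uniformly, and there simply are no rounds where $I_t$ is ``too small for the tail bounds to bite.''

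In your formulation that obstacle is real, and your two proposed remedies are not equally viable: the Galton--Watson/moment-generating-function route is nontrivial, since the martingale limit $W$ has only a stretched-exponential upper tail and quantifying it is real work rather than a citation. The round-by-round route does work, but ``taking care of the first few rounds'' needs one concrete observation that you do not state: informed vertices stay informed and each vertex makes at most one call per round, so the one-round growth factor never exceeds $2$; hence the \emph{cumulative} multiplicative growth contributed by all rounds with $|I_t|<n^{\e/2}$ is deterministically at most $2n^{\e/2}$ (consider the round in which the threshold is first crossed). Combined with the Chernoff-controlled factor $\left((1+n^{-\e/6})(1+p)\right)^{t_1}\le(1+o(1))\,n^{1-\e}$ over the remaining at most $t_1$ rounds, this gives $I_{t_1}\le 4n^{1-\e/2}<n/2$ except with probability $t_1e^{-\Omega(pn^{\e/6})}=e^{-\Omega(n^{\e/6})}$, which is exactly the computation in the paper's Lemma~\ref{RandomPhase2}, just phrased in your notation. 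One further small point: your justification of negative correlation via $(1-1/N)^k\ge 1-k/N$ is a non sequitur; what is needed (and what the paper likewise asserts without proof) is the negative association of the ``vertex receives no successful call'' indicators in the underlying balls-into-bins experiment.
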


The key to this proof is to split up the rumor spreading process into three phases. The first phase is composed of the rounds that occur between the start of the process and the end of the first round at which point $n^{\e/2}$ nodes are informed. The second phase begins directly after Phase 1 terminates, and continues until the end of the first round after which $n/4$ nodes are informed. Within each round of Phase 2, the number of informed nodes will grow by a multiplicative factor. The last phase begins directly after Phase 2 terminates, and continues until all the nodes are informed. In this phase we observe a type of coupon collector process.

In order to establish the lower bound posited above, we give lower bounds for the durations of Phases 2 and 3. 
By $N_t$ we will denote the set of vertices that are newly informed at a given time-step $t$. 
By $I_t$ we will denote the set of vertices informed by time $t$.

\begin{lemma}\label{RandomPhase2}
Let $\e > 0$. With probability $1-e^{-\Omega(n^{\e/6})}$, we need more than $(1-\e)\log_{1+p}n$ rounds to complete Phase 2.
\end{lemma}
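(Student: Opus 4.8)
The plan is to turn the desired lower bound on the number of rounds into an upper bound on how fast the informed set can grow, and then to show this growth is slow with overwhelming probability. The crucial observation is that in any single round $t$ of Phase~2, each of the $|I_{t-1}|$ currently informed vertices transmits exactly once and its transmission succeeds independently with probability $p$; hence, conditionally on the history, the number of successful transmissions $S_t$ is a sum of $|I_{t-1}|$ independent $\{0,1\}$ variables, i.e.\ $S_t$ is $\mathrm{Bin}(|I_{t-1}|,p)$-distributed with $\E(S_t)=p\,|I_{t-1}|$. Since a vertex can only be newly informed by a successful transmission, we have the deterministic domination $|N_t| \le S_t$, so that $|I_t| \le (1 + (1+\delta)p)\,|I_{t-1}| \le (1+p)(1+\delta)\,|I_{t-1}|$ whenever $S_t \le (1+\delta)\,\E(S_t)$. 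This one-sided deviation is exactly what the upper Chernoff bound of Lemma~\ref{thm:chernoff} controls.

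First I would fix the deviation parameter $\delta := n^{-\e/6}$ and analyse a single round. Throughout Phase~2 we have $|I_{t-1}| \ge n^{\e/2}$, so $\E(S_t) \ge p\,n^{\e/2}$ and Lemma~\ref{thm:chernoff} gives $\Pr(S_t \ge (1+\delta)\E(S_t)) \le e^{-\delta^2 \E(S_t)/3} \le e^{-(p/3)\,n^{\e/6}}$, using $\delta^2 n^{\e/2} = n^{\e/6}$. A union bound over the first $R := \lfloor(1-\e)\log_{1+p} n\rfloor = O(\log n)$ rounds of Phase~2 then shows that with probability $1 - O(\log n)\,e^{-(p/3)n^{\e/6}} = 1 - e^{-\Omega(n^{\e/6})}$ the good event ``$|I_t| \le (1+p)(1+\delta)\,|I_{t-1}|$ for every such round'' holds.

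On this good event the rest is bookkeeping. The informed set at the start of Phase~2 has size deterministically below $2n^{\e/2}$, since in the final round of Phase~1 it passed from below $n^{\e/2}$ to its current value and each round can at most double the count (each informed vertex informs at most one new vertex). Iterating the per-round bound, after $R$ rounds the informed set has size at most $2n^{\e/2}(1+p)^R(1+\delta)^R$; since $(1+\delta)^R \le e^{\delta R} = e^{O(n^{-\e/6}\log n)} = 1+o(1)$ and $(1+p)^R = n^{1-\e}$, plugging in $R = (1-\e)\log_{1+p} n$ yields the bound $(2+o(1))\,n^{\e/2}\, n^{1-\e} = (2+o(1))\,n^{1-\e/2}$, which is below $n/4$ for $n$ large. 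Hence after $(1-\e)\log_{1+p} n$ rounds fewer than $n/4$ vertices are informed, so Phase~2 cannot yet be complete, which is the assertion.

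I expect the main obstacle to be organisational rather than conceptual. One must justify the conditional Chernoff estimate cleanly: the distribution of $S_t$ given the entire past depends only on $|I_{t-1}|$, so the per-round bound and hence the union bound over the \emph{deterministic} index range $t \le R$ are legitimate even though the true number of Phase~2 rounds is random. One must also choose $\delta$ so that simultaneously $\delta = o(1)$ (to absorb the growth factor into $(1+p)$ and recover the clean leading constant) and $\delta^2 n^{\e/2} \to \infty$ fast enough for the stated $e^{-\Omega(n^{\e/6})}$ failure probability; the choice $\delta = n^{-\e/6}$ balances both. The slack between the starting exponent $\e/2$ and the target $\e$ is precisely what renders the constant factor $2$ and the $(1+\delta)^R$ correction harmless.
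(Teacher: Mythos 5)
Your proposal is correct and follows essentially the same route as the paper's proof: bound the newly informed vertices by the binomial count of successful transmissions, apply the one-sided Chernoff bound with deviation $\delta = n^{-\e/6}$, iterate the per-round growth factor $(1+p)(1+\delta)$ over $(1-\e)\log_{1+p}n$ rounds starting from the fewer than $2n^{\e/2}$ vertices informed at the end of Phase~1, and conclude the informed set stays below $n/4$. Your treatment is in fact slightly more careful than the paper's on two points it leaves implicit (the conditional justification of the Chernoff step and the doubling argument for the $2n^{\e/2}$ starting bound), but these are expository refinements, not a different argument.
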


\begin{proof}
Let $t_1$ denote the number of rounds needed to inform the first $n^{\e/2}$ nodes. Note that this means that $n^{\e/2} \leq |I_{t_1}| < 2n^{\e/2}$. Let $t \geq t_1$. We have $\E(|N_{t+1}|)\leq p|I_t|$. Enumerate the nodes in $I_t$ from 1 to $|I_t|$, and define the indicator random variables $X_1,\dots X_{|I_t|}$ such that

\begin{equation*}
X_i = \begin{cases}
       1 &\mbox{if vertex }i\mbox{ has successfully contacted another vertex}\\
       0 &\mbox{otherwise.}
      \end{cases}
\end{equation*}

In this context a successful contact refers only to the transmission of the rumor, regardless of whether or not the contacted vertex was already informed or is also contacted by another vertex. 
Therefore the random variables $X_1,\dots , X_{|I_t|}$ are independent.

If $X:=\sum_iX_i$, then $\E(X)=p|I_t|$. 
It is intuitive that $|N_{t+1}|\leq X$, because $X$ not only counts all the nodes in $N_{t+1}$, but also counts nodes multiple times if they are contacted by multiple nodes, and counts nodes that are contacted in round $t+1$ that have already been informed in  previous rounds. 
Therefore, any upper bound we can find on the size of $X$ also holds as an upper bound for the size of $N_{t+1}$. 
But because of the independence of $X_1,\dots , X_{|I_t|}$, the random variable $X$ is a lot easier to handle.

Using Chernoff bounds, we see that

\begin{align*}
\Pr\left(X>(1+n^{-\e/6})\E(X)	\right)
& \leq\exp\left(-\tfrac{1}{3}n^{-\e/3}p|I_t|\right)\\
& \leq\exp\left(-\tfrac{1}{3}n^{-\e/3}pn^{\e/2}\right)\\
& =\exp\left(-\tfrac{1}{3}pn^{\e/6}\right)\\
& =e^{-\Omega(n^{\e/6})}.
\end{align*}
So with probability $1-e^{-\Omega(n^{\e/6})}$, the number of nodes informed after $t+1$ rounds satisfies $$|I_{t+1}|\leq |I_t|+(1+n^{-\e/6})p|I_t|\leq (1+n^{-\e/6})(1+p)|I_t|.$$ We can therefore infer by using recursion that for every $k\in \N$ we have

\begin{equation*}
|I_{t+k}|\leq (1+n^{-\e/6})^k(1+p)^k|I_t|
\end{equation*}
with probability $1-ke^{-\Omega(n^{\e/6})}$.
Pick $k = (1-\e)\log_{1+p}n$. Under the assumption that $n$ is sufficiently large, we compute

\begin{align*}
 |I_{t_1+(1-\e)\log_{1+p}n}|&\leq (1+n^{-\e/6})^{(1-\e)\log_{1+p}n}(1+p)^{(1-\e)\log_{1+p}n}2n^{\e/2}\\
&=(1+n^{-\e/6})^{(1-\e)\log_{1+p}n}n^{1-\e}2n^{\e/2}\\
&\leq \exp\left(n^{-\e/6}(1-\e)\log_{1+p}n\right)2n^{1-\e/2}\\
&\leq 4n^{1-\e/2}\\
&< n/4
\end{align*}
with probability $1-e^{-\Omega(n^{\e/6})}$.
So $(1-\e)\log_{1+p}n$ rounds are, with probability $1-e^{-\Omega(n^{\e/6})}$, not enough to complete Phase 2.
\end{proof}

\begin{lemma}\label{RandomPhase3}
Let $\e > 0$. With probability $1- e^{-\Omega(n^\e)}$, we need more than $(1-\e)\frac{1}{p}\ln n$ rounds to complete Phase 3.
\end{lemma}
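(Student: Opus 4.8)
The plan is to analyze Phase 3 as a coupon-collector-type process and show that informing the last uninformed vertices requires essentially $(1/p)\ln n$ rounds. At the start of Phase 3 we have $|I_{t_2}| \ge n/4$ informed vertices (and hence at most $3n/4$ uninformed ones), but the key point is the \emph{final} stretch: a specific uninformed vertex $v$ becomes informed in a given round only if some informed vertex both chooses $v$ as its target \emph{and} the transmission succeeds. First I would fix attention on the set $U$ of vertices still uninformed at the beginning of Phase 3 and track the probability that a particular $v \in U$ remains uninformed throughout the next $k := (1-\e)\frac 1p \ln n$ rounds. In each round, each informed vertex independently hits $v$ with probability $\frac{1}{n-1}$ and succeeds with probability $p$, so the per-round probability that $v$ is informed is at most $p \cdot \frac{|I_t|}{n-1}$, which I can bound by something like $p(1+o(1))$ using $|I_t| \le n$. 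Thus the probability that $v$ survives all $k$ rounds is at least $(1 - \tfrac{p}{n-1}\cdot n)^{\,\text{something}}$ — more carefully, at least $\prod_t (1 - p\tfrac{|I_t|}{n-1}) \ge (1-p\tfrac{n}{n-1})^k$-type expression, which for $k = (1-\e)\frac 1p \ln n$ is of order $n^{-(1-\e)}$, i.e.\ polynomially large enough that the expected number of surviving vertices is a growing power of $n$.

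The second step is to turn this single-vertex survival probability into a statement about the whole set $U$: I want to show that with high probability \emph{at least one} vertex of $U$ survives all $k$ rounds, so Phase 3 cannot have finished. Let $S$ denote the number of vertices of $U$ still uninformed after $k$ rounds. By the first step, $\E(S) \ge |U|\cdot n^{-(1-\e)+o(1)}$, which (since $|U| \ge n/4$) is at least roughly $n^{\e + o(1)}$, a polynomially large quantity. The natural way to conclude is a second-moment / concentration argument: I would show $S$ is concentrated around its (large) expectation so that $\Pr(S = 0) = e^{-\Omega(n^\e)}$, matching the claimed probability bound. The survival events for distinct vertices $v, w \in U$ are close to independent — the only coupling comes through which informed vertices exist in each round — and I expect them to be negatively correlated (informing $v$ makes it slightly \emph{more} likely the transmissions that round did not also inform $w$), which would let me invoke Lemma~\ref{thm:chernoffnc} (Chernoff for negatively correlated variables) to get the exponential concentration directly.

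The main obstacle I anticipate is handling the dependence introduced by the fact that the number of informed vertices $|I_t|$ is itself random and grows from round to round, so the per-round informing probability $p|I_t|/(n-1)$ is not fixed. I would address this by conditioning on the (high-probability) event from Phase 2 that $|I_t| \le n$ for all $t$ (trivially true) together with a matching upper bound, and by observing that the survival probability is \emph{monotone}: replacing the actual informed set by the worst case $I_t = V$ (all $n-1$ other vertices informed) only decreases the chance $v$ survives, so a clean lower bound on survival follows from the extreme case. A secondary subtlety is verifying the negative-correlation hypothesis of Lemma~\ref{thm:chernoffnc} rigorously for the indicator variables ``$v$ is still uninformed after $k$ rounds''; if that verification is awkward, the fallback is a second-moment (Chebyshev) argument, computing $\E(S)$ and $\E(S^2)$ and bounding $\Pr(S=0) \le \mathrm{Var}(S)/\E(S)^2$, at the cost of a weaker but still polynomially small probability bound. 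The counting and the choice $k=(1-\e)\frac 1p\ln n$ are arranged precisely so that $\E(S)$ is a positive power of $n$, which is what makes the exponential failure probability $e^{-\Omega(n^\e)}$ come out.
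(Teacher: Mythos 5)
Your plan follows the paper's proof essentially step for step: bound the survival probability of a single uninformed vertex over $(1-\e)\frac1p\ln n$ rounds by the worst case in which all $n-1$ other vertices are informed, deduce $\E(S) = \Omega(n^{\e})$ using $|I_{t_2}| < n/2$, and then apply the Chernoff bound for negatively correlated indicators (Lemma~\ref{thm:chernoffnc}) to get $\Pr(S=0) \le e^{-\Omega(n^{\e})}$. The paper likewise asserts the negative correlation without further verification, so your primary route (not the Chebyshev fallback, which would be too weak for the stated exponential bound) is exactly the published argument.
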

\begin{proof}
Let $t_2$ denote the number of rounds needed to inform the first $n/4$ nodes. 
This means that we have $n/4 \leq |I_{t_2}| < n/2$.

For the remainder of the proof we will consider a modified model in which \emph{every} node (not only informed ones) randomly chooses a neighbor at the beginning of each round, and if this neighbor was uninformed it will then be considered informed independently with probability~$p \in (0,1]$. A lower bound for the broadcast time of the modified model also is a lower bound for the broadcast time of the original model. 

Enumerate the uninformed nodes at time $t_2$ from 1 to $|V\setminus I_{t_2}|$. 
Define the indicator random variables $X_1,\dots , X_{|V\setminus I_{t_2}|}$ such that for $i \in \{1, \dots, |V\setminus I_{t_2}|\}$, we have

\begin{equation*}
X_i=\begin{cases}
     1 &\mbox{if node }i\mbox{ is uninformed at time }t_2+(1-\e)\frac{1}{p}\ln n,\\
     0 &\mbox{otherwise.}
    \end{cases}
\end{equation*}

They are negatively correlated, as in the modified model the following holds. For any uninformed node $i \in \{1, \dots, |V\setminus I_{t_2}|\}$, the information that other uninformed nodes become informed during rounds $t_2+1$ to $t_2+(1-\e)\frac{1}{p}\ln n$ makes it more likely for $i$ to remain uninformed, and the information that other uninformed nodes remain uninformed makes it more likely for $i$ to become informed.

Let $X:=\sum_{i=1}^{|V\setminus I_{t_2}|} X_i$. 
This is the number of uninformed vertices at time $t_2+(1-\e)\frac{1}{p}\ln n$. 
Since $|V\setminus I_{t_2}| = n - |I_{t_2}| \geq n/2$ and because for any uninformed node $i \in \{1, \dots, |V\setminus I_{t_2}|\}$ we have
\begin{equation*}
\Pr(X_i=1)\geq \left(1-\frac{p}{n-1}	\right)^{(n-1)(1-\e)\frac{1}{p}\ln n},
\end{equation*}
we can bound the expected value as follows.
\begin{align*}
 \E(X) &= \sum_{i=1}^{|V\setminus I_{t_2}|} \Pr(X_i=1)
\geq \frac{n}{2}\left(1-\frac{p}{n-1}	\right)^{(n-1)(1-\e)\frac{1}{p}\ln n}.
\end{align*}
Since for small enough $x > 0$ we have $1 - x \geq e^{-x-x^2}$, we obtain
\begin{align*}
 \E(X) 
&\geq \frac{n}{2}\exp\left(-(1-\e)\ln n-\frac{p}{n-1}(1-\e)\ln n\right)\\
&\geq \tfrac{1}{4}ne^{-(1-\e)\ln n}\\
&= \tfrac{1}{4}n^\e,
\end{align*}
assuming that $n$ is sufficiently large.
By \thmref{chernoffnc} we get
\begin{equation*}
\Pr(X=0)\leq \Pr\left(X\leq (1-\tfrac{1}{2})\E(X)\right)\leq e^{-\E(X)/8} \leq e^{-n^\e/32}.
\end{equation*}

So with probability at least $1 - e^{-n^\e/32}$, we see that $(1-\e)\frac{1}{p}\ln n$ rounds are insufficient to complete Phase 3.
\end{proof}

\begin{proof}[Proof of Theorem 1]
 Follows immediately from Lemmas 2 and 3.
\end{proof}

\section{Upper Bound for Quasirandom Rumor Spreading}\label{sec:upper}

In this section we analyze the quasirandom counterpart of the rumor spreading model described in the previous section.
This model differs from the random model in that each vertex is equipped with a cyclic list of its neighbors and only chooses the first neighbor it attempts to contact at random. After its initial choice, each vertex subsequently attempts to contact the remaining vertices in the order of its list.
Each of these attempts is independently successful with probability $p$. Note that we do not assume that the sender is notified of a transmission failure.

Our goal is to prove that the rumor in the quasirandom model spreads at least as quickly as in the random model. To this aim, we prove the following.

\begin{theorem}\label{thm:upper}
For every $\e>0$ and~$p \in (0,1]$, the number of rounds we need to inform all the nodes of the complete graph on $n$ vertices using the quasirandom rumor spreading model with message success probability $p$ is at most
\begin{equation*}
(1+\e)\left(\log_{1+p}n+\tfrac{1}{p}\ln n\right)
\end{equation*}
with probability at least $1 - n^{-p\e/40}$.
\end{theorem}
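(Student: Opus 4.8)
The plan is to mirror the three-phase decomposition used for the lower bound, but now bounding every phase from above. Write $I_t$ for the set of informed vertices after round $t$ and $y_t := 1 - |I_t|/n$ for the uninformed fraction, and expose the underlying randomness as the one-time cyclic starting offsets $\sigma_v$ (each uniform and independent) together with the independent per-transmission loss bits. The engine of the whole argument is the one-round estimate $\E(y_{t+1}\mid I_t)\approx y_t\,e^{-p(1-y_t)}$: an uninformed vertex $w$ survives round $t+1$ only if none of the $|I_t|$ informed vertices both aims at it and transmits successfully, and a fixed informed vertex aims at a given target with probability about $1/(n-1)$ and then succeeds with probability $p$. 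This single recursion produces both leading terms: while $|I_t|=o(n)$ it reads $|I_{t+1}|\approx(1+p)|I_t|$, and once $y_t=o(1)$ it reads $y_{t+1}\approx y_t\,e^{-p}$.

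First I would handle the growth phase. I want to show that $|I_t|$ multiplies by a factor $(1+p)(1-o(1))$ each round as long as $|I_t|=o(n)$, so that after $(1+o(1))\log_{1+p}n$ rounds, counting from the single source, the informed set has reached size $n(1-1/\ln n)$, i.e. $y_t\le 1/\ln n$. Sharp concentration is only available once $|I_t|$ is polynomially large, so the earliest rounds must be bounded by a cruder argument; they, the degradation of the growth factor away from $(1+p)$ as $y_t$ leaves $1$, and the subsequent shrinking of $y_t$ from a constant down to $1/\ln n$, together cost only $o(\log_{1+p}n)$ extra rounds and are absorbed into the $\e$-slack. For the concentration I would fix a round, condition on $I_t$, and write the number of newly informed vertices as a sum of $\{0,1\}$-indicators over the uninformed vertices; I expect these to be negatively correlated in the occupancy sense (an informed vertex aimed at one target is aimed at no other), so that Lemma~\ref{thm:chernoffnc} applies, and a union bound over the $O(\log n)$ rounds controls the failure probability.

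Second comes the mop-up phase. With $y_T\le 1/\ln n$ at its start, I fix any still-uninformed vertex $v$ and look at the window of $W:=(1+\e)\tfrac1p\ln n$ rounds that follows. Since $W<n-1$, each of the $\ge n(1-o(1))$ vertices informed by time $T$ aims at $v$ at most once during the window, and --- because its offset $\sigma_u$ is uniform --- does so with probability $\approx W/(n-1)$, after which the single transmission succeeds with probability $p$. Treating these attempts as essentially independent across the informing vertices, $v$ survives the whole window with probability about $\bigl(1-\tfrac{pW}{n-1}\bigr)^{n(1-o(1))}\approx e^{-pW}=n^{-(1+\e)}$, and a union bound over the at most $n/\ln n$ vertices still uninformed at time $T$ leaves a total failure probability $o(1)$; this is where the full $(1+\e)$ factor on the second term and the claimed $n^{-p\e/40}$ bound originate. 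It is essential here that the informing pressure on $v$ is $\approx p$ per round rather than $\approx p/2$, which is precisely why the growth phase was pushed all the way to $y_T=o(1)$ before the mop-up begins.

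The main obstacle, and the place where genuine care is needed, is the dependency structure peculiar to the quasirandom model: a vertex's entire sequence of targets is a deterministic function of its single random offset, so once one conditions on the history of the process the future targets are fixed, and the convenient ``fresh independent target each round'' picture of the fully random model is unavailable. I would address this by treating the offsets $(\sigma_v)_v$ and the loss bits as the independent coordinates of one product space and invoking Azuma's inequality (Lemma~\ref{thm:azuma}): changing a single vertex's offset perturbs only that vertex's own contacts and hence changes $|I_t|$ by a bounded amount, which yields concentration of $|I_t|$ around the value predicted by the recursion. Pinning down the bounded-difference constants and the propagation of one altered offset through the process --- so that the heuristic recursion above holds with overwhelming probability rather than merely in expectation --- is the technically hardest step, and the reason the fully random lower bound of Section~\ref{sec:lower} is so much shorter than this upper bound.
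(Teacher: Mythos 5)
Your proposal contains a genuine gap, and it sits exactly at the point you yourself flag as the ``main obstacle.'' The one-round recursion $\E(y_{t+1}\mid I_t)\approx y_t e^{-p(1-y_t)}$, the negative-correlation claim for the per-round indicators, and the mop-up computation in which every vertex of $I_T$ aims at a fixed uninformed $v$ with probability $\approx W/(n-1)$ are all false in the quasirandom model \emph{conditional on the history}: once an informed vertex has made its first contact, its offset $\sigma_u$ is completely revealed, and every one of its future targets is deterministic (only the loss bits remain random). There is no fresh, near-uniform aiming in each round, so none of these conditional estimates can be justified as stated. The Azuma argument you propose to repair this does not work either, because the bounded-difference hypothesis of Lemma~\ref{thm:azuma} fails badly for the full process: changing a single vertex's offset changes whom it informs, and that perturbation cascades --- during the exponential-growth regime one extra vertex informed at time $t$ can spawn on the order of $(1+p)^{T-t}$ extra informed vertices by time $T$, so the constants $c_i$ are polynomial in $n$ and the inequality yields nothing. (A telltale sign of the difficulty is your closing remark that the recursion should hold ``with overwhelming probability rather than merely in expectation''; for this process that is precisely what cannot be obtained by bounded differences.)

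The machinery you are missing is the paper's delaying/ignoring phase structure, which exists precisely to manufacture fresh randomness and to kill the cascade. The paper only ever lets \emph{newly informed} vertices (those that have not yet made any contact, hence whose offsets are still uniform given the history) act at the start of a phase. In \emph{lazy} phases, vertices contacted during the phase stay inactive, so changing one active vertex's offset alters the phase outcome by at most the phase length --- this is the only place where Azuma is applied (Lemma~\ref{LazyPhase3}), and it is legitimate there for exactly this reason. The per-round factor $(1+p)$ is recovered not round by round but through \emph{busy} phases of constant length $k$ (Theorem~\ref{1busyphase}): each newly informed vertex's descendant tree is analyzed in isolation, trees involved in ``conflicts'' are discarded entirely, and the surviving trees give growth $p(1+p)^{k-2}$ per $k$ rounds, which approaches the ideal rate only as $k\to\infty$; this is why $k$ is tied to $\e$. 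Your high-level phase decomposition (exponential growth to $1-o(1)$, then a coupon-collector window of $(1+\e)\tfrac1p\ln n$ rounds) matches the paper's in spirit, but without the lazy/busy decomposition and the conflict-free-tree analysis, the two concentration steps at the heart of your argument cannot be carried out. A minor additional point: the dominant term in the paper's failure probability, $n^{-p\e/36}$, comes from the very first (short) lazy phase, not from the final union bound as you suggest.
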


Unfortunately, since the rumor spreading process is saturated with many dependencies, determining the runtime for the the quasirandom model is not straightforward. As in~\cite{DFS08}, we try to overcome this difficulty by suitably simplifying the random experiment, in particular, by assuming that certain vertices stop informing (\emph{ignoring}), and that other vertices do not immediately start their own informing process after becoming informed (\emph{delaying}). 
Delaying turns out to be useful as it gives us some influence on when a vertex uses its one random choice. Nodes that have been informed but have not yet begun informing new nodes play an important role in our analysis. We call them \emph{newly informed} vertices.

To obtain bounds that are precise up to the leading constant, however, we have to be careful that our delaying and ignoring techniques do not slow down the rumor spreading process too much. For this reason, we partition the set of rounds that are necessary to inform all the nodes in the graph into two different types of phases. For both types of phases, the set of nodes that are initially active is the set of newly informed nodes.
	
\emph{Lazy phases} were also used in the time analysis of~\cite{DFS08}. Only nodes that are considered active at the beginning of the phase are considered active for the remainder of the phase.
Nodes that are contacted during the phase, although they are still considered to be informed, remain inactive, and are therefore unable to spread the rumor themselves for the continuation of the phase.

Since lazy phases neglect the rumor spreading potential of a significant portion of the nodes, we also need \emph{busy phases}. 
Here, all nodes informed during the busy phase are active for the remainder of the phase. 
In other words, nodes newly informed during the busy phase have the ability to spread the rumor in each subsequent round until the termination of the phase. By choosing the lengths of the busy phases suitably, we balance the difficulties with the inherent dependencies and the losses due to ignoring informed vertices at the end of each phase.
	
As a result of implementing phases in which vertices that can spread the rumor in the original model are now inactive, we are only delaying the point in time at which all the vertices are informed.
Therefore, the upper bound for the quasirandom model with lazy and busy phases holds as an upper bound for the original quasirandom model.

We will split the rumor spreading process into lazy and busy phases in the following way. We start with two lazy phases of $\frac{1}{2}\e\ln n$ rounds each. The main purpose of these two phases, which are easy to analyze, is to inform a set of vertices that is sufficiently large enough to maximize the effectiveness of the subsequent busy phases. We then perform a logarithmic number of busy phases, each composed of a constant number of rounds. This process results in a constant fraction of informed nodes, and we only need two more lazy phases to render the entire network informed.

Let $I_t$ denote the set of vertices that are informed at a given time-step $t$. Similarly, we will denote the set of newly informed vertices at time $t$ by $N_t$.

\subsection{The First Lazy Phase}

The first lazy phase lasts for $\frac{1}{2}\e\ln n$ rounds. 
Our goal is to prove the following.

\begin{lemma}\label{LazyPhase1}
Let $\e>0$. After one lazy phase of length $\frac{1}{2}\e\ln n$, at least $\frac{1}{3}p\e\ln n$ nodes are newly informed with probability at least $1-n^{-p\e/36}$.
\end{lemma}
\begin{proof}
Let $t_1:=\frac{1}{2}\e\ln n$. At time $t=0$ one node, $v_0$, is informed. We perform a lazy phase of length $t_1$. This means that $v_0$ contacts each of the first $t_1$ nodes from its list with probability $p$. Therefore, 
 
\begin{equation*}\E (|N_{t_1}|)=\tfrac{1}{2}p\e\ln n.
\end{equation*}
Using Chernoff bounds we see that

\begin{align*}
 \Pr\left(|N_{t_1}|< \tfrac{1}{3}p\e\ln n	\right)&= \Pr\left(|N_{t_1}|< \left(1-\tfrac{1}{3}\right)\E (|N_{t_1}|)\right)\\
& \leq \exp\left(-\E (|N_{t_1}|)/18	\right)\\
&= \exp\left(-p\e\ln n/36	\right)\\
&= n^{-p\e/36}.
\end{align*}
\end{proof}

\subsection{The Second Lazy Phase}

The second lazy phase begins at time $t_1+1$ and terminates after $\frac{1}{2}\e\ln n$ rounds.
Our goal is to prove the following.

\begin{lemma}\label{LazyPhase2}
Let $\e>0$. If, at some point $t_1$ in our model, we have $\frac{1}{3}p\e\ln n \leq |N_{t_1}| \leq \frac{1}{2}\e\ln n$ and $|I_{t_1}|\leq \frac{1}{2}\e\ln n + 1$, then after one lazy phase of length $\frac{1}{2}\e\ln n$, at least $\left(\frac{1}{3}p\e\ln n\right)^2$ nodes are newly informed with probability at least $1-n^{-\gamma}$ for any $\gamma \in [0, 1)$.
\end{lemma}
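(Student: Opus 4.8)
The plan is to analyze the second lazy phase by counting, for each newly informed vertex present at time $t_1$, how many new vertices it informs during the $\frac{1}{2}\e\ln n$ rounds of the phase. Since this is a lazy phase, the vertices active throughout are exactly the $|N_{t_1}|$ newly informed vertices; vertices contacted during the phase stay inactive. Each active vertex makes its first (random) contact and then proceeds deterministically down its cyclic list, with each of the $\frac{1}{2}\e\ln n$ attempts succeeding independently with probability $p$. So each active vertex informs a new vertex with probability $p$ per round, and the expected number of successful transmissions it makes over the phase is $\frac{1}{2}p\e\ln n$. Summing over the $|N_{t_1}| \ge \frac{1}{3}p\e\ln n$ active vertices, the expected number of successful transmissions is at least roughly $\frac{1}{6}p^2\e^2\ln^2 n$, which comfortably exceeds the target $\left(\frac{1}{3}p\e\ln n\right)^2$.

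**Key steps.** First I would define, for each active vertex $u$, an indicator or count of its successful transmissions and observe that across distinct active vertices these are independent (the only shared randomness is the list structure, which is fixed, and each vertex's random first-choice and loss coins are its own). Second, I would argue that the number of genuinely \emph{new} vertices informed is not much less than the number of successful transmissions: the losses come from two sources, namely two active vertices hitting the same target (collisions) and an active vertex hitting a vertex that is already informed. Because $|I_{t_1}|$ and the total number of transmissions are only polylogarithmic in $n$ while there are $n$ vertices overall, a union/second-moment bound shows that the expected number of collisions and repeat-hits is lower order, so with high probability the count of distinct new vertices is $(1-o(1))$ times the number of successful transmissions. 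Third, I would apply a Chernoff bound (Lemma~\ref{thm:chernoff}) to the sum of successful-transmission indicators, which are independent and bounded, to show that the total is at least, say, $\frac{1}{2}$ of its expectation with failure probability $n^{-\gamma}$ for any $\gamma \in [0,1)$; here the crucial point is that the expectation is $\Theta(\log^2 n)$, so $e^{-\Theta(\log^2 n)}$ beats $n^{-\gamma}$ for every fixed $\gamma < 1$ once $n$ is large.

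**The main obstacle.** The delicate part is controlling the loss from collisions and from hitting already-informed vertices so that the \emph{distinct} new count still exceeds $\left(\frac{1}{3}p\e\ln n\right)^2$ with the stated probability, rather than merely controlling the raw transmission count. The raw count has expectation about $\frac{1}{6}p^2\e^2\ln^2 n$, while the target is $\frac{1}{9}p^2\e^2\ln^2 n$, so the multiplicative slack is small (a constant factor close to $1.5$); I must ensure the collision/repeat losses, together with the Chernoff deviation, consume less than this slack. The cleanest way is to bound the expected number of ordered collision pairs by $\binom{M}{2}/n$ where $M = O(\log^2 n)$ is the total number of transmissions, and the expected number of repeat-hits by $M \cdot |I_{t_1}|/n = O(\log^3 n)/n = o(1)$, both of which are negligible compared to the $\Theta(\log^2 n)$ distinct count; a Markov bound then removes these losses with probability $1 - o(1)$, and combining with the Chernoff estimate via a union bound yields the claimed $1 - n^{-\gamma}$ guarantee.
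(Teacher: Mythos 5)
Your proposal is correct and takes essentially the same approach as the paper: a Chernoff bound on the sum of independent transmission-success indicators, combined with a first-moment/union bound showing that collisions and hits on already-informed vertices occur with probability only $O(\log^4 n)/n = n^{-1+o(1)}$ --- which is exactly the term that forces the restriction $\gamma < 1$ (the paper packages the same computation as per-vertex ``conflict'' indicators $X_i$ and conditions on none occurring). One caveat: your placeholder of showing the raw count is ``at least $\tfrac{1}{2}$ of its expectation'' would only give $\tfrac{1}{12}p^2\e^2\ln^2 n$, which falls short of the target $\left(\tfrac{1}{3}p\e\ln n\right)^2 = \tfrac{1}{9}p^2\e^2\ln^2 n$; as your own slack computation in the last paragraph indicates, you must take the Chernoff deviation $\delta \le \tfrac{1}{3}$ (the paper uses exactly $\delta = \tfrac{1}{3}$), which still yields failure probability $e^{-\Theta(\log^2 n)}$ and so does not affect the conclusion.
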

\begin{proof}
Let $t_1\in \N$ be such that $\frac{1}{3}p\e\ln n \leq |N_{t_1}| \leq \frac{1}{2}\e\ln n$ and that $|I_{t_1}|\leq \frac{1}{2}\e\ln n + 1$ and let $t_2:=t_1+\frac{1}{2}\e\ln n$. Enumerate the nodes of $N_{t_1}$ from 1 to $|N_{t_1}|$, and impose an artificial ordering on the set so that each node $i$ calls $\frac{1}{2}\e\ln n$ of its neighbors, determined from its cyclic list and its initial random decision, before node $i+1$ attempts any contact.
For each $i\in \{1,\dots,|N_{t_1}|\}$, let $U_i$ denote the set of vertices that $i$ attempts to contact during the next $\frac{1}{2}\e\ln n$ rounds and let $X_i$ be the indicator random variable of the event that $U_i$ is disjoint from $\left(\bigcup_{j=1}^{i-1}U_j\right)\cup I_{t_1}$.
If $X_i= 1$ for all $i\in \{1,\dots,|N_{t_1}|\}$, then $|N_{t_2}|$ is equal to the number of contacts made during this phase. 

When vertex $i$ first attempts contact, at most $\left|I_{t_1} \setminus \{i\}\right| + \frac{1}{2}(i - 1)\e\ln n \leq \frac{1}{2}i\e\ln n$ other vertices are already informed. The probability for $i$ to attempt to contact one of these vertices is largest when they are at distance at least $\frac{1}{2}\e\ln n$ from each other in the list of $i$. Therefore, 
\begin{equation}
\Pr(X_i=0)\leq \frac{\left(\tfrac{1}{2}\e\ln n\right)\left(\tfrac{1}{2}i\e\ln n\right)}{n-1}\leq \frac{\left(\tfrac{1}{2}\e\ln n\right)^3}{n-1}.
\end{equation}
Using a simple union bound, we conclude that
\begin{align*}
\Pr(\forall i\in\{1,\dots,|N_{t_1}|\}:X_i=1)& = 1-\Pr(\exists i\in\{1,\dots,|N_{t_1}|\}:X_i=0)\\
&\geq 1-\sum_{i=1}^{|N_{t_1}|}\Pr(X_i=0)\\
&\geq 1-\frac{\left(\tfrac{1}{2}\e\ln n\right)^4}{n-1}.
\end{align*}

Now that we have shown that the chances of contacting an already informed vertex in this phase are sufficiently small, all that is left to do is to determine how many contacts are made during the phase.

Every node in $N_{t_1}$ attempts to contact $\frac{1}{2}\e\ln n$ nodes, so there are $\frac{1}{2}\e\ln n|N_{t_1}|$ possible contacts made during the phase. Each of these is independently successful with probability $p$. Let $Y$ be the random variable denoting the number of contacts that are actually made during the phase. Then we have
\begin{equation}
\E(Y)
=	\tfrac{1}{2}p\e\ln n|N_{t_1}|
\geq	\left(\tfrac{1}{2}p\e\ln n\right) \left(\tfrac{1}{3}p\e\ln n\right) .
\end{equation}
Using Chernoff bounds, we see that

\begin{align*}
\Pr\left(Y<\left(\tfrac{1}{3}p\e\ln n	\right)^2	\right)&\leq\Pr\left(Y<\left(1-\tfrac{1}{3}	\right)\E(Y)\right)\\
&\leq e^{-\E(Y)/18}\\
&\leq e^{-(p\e\ln n)^2/108}\\
&=n^{-p^2\e^2(\ln n)/108}.
\end{align*}

Therefore, at least $\left(\frac{1}{3}p\e\ln n	\right)^2$ vertices are informed during this phase with probability at least $1-\frac{\left(\frac{1}{2}\e\ln n\right)^4}{n-1}-n^{-p^2\e^2(\ln n)/108}
\geq	1-n^{-\gamma}$ for any fixed $\gamma \in [0, 1)$.
\end{proof}


\subsection{The Busy Phases}
A sufficient number of nodes are informed of the rumor in the first lazy phase, and so we are ready to commence the set of busy phases. As we have mentioned earlier, the idea of these phases is that
nodes informed during each busy phase are able to spread the rumor during subsequent rounds of this phase. Because of the dependencies, these phases require a more refined analysis. Our goal is to inform a constant fraction of the nodes in the network by the time we complete this sequence of phases.

\subsubsection{The Analysis of a Single Busy Phase.}
In order to determine the cumulative effect of the busy phases, we must first analyze the impact of a single busy phase composed of $k$ rounds starting after time-step $t$. The theorem we present below is the heart of the precise analysis of the quasirandom model. The idea of the proof is to investigate the part of the process originating from each single node in $N_{t}$. A single such process can be analyzed with moderate difficulty. Unfortunately, there may be  ``conflicts'' among these partial processes, that is, several of these partial processes may inform the same node, possibly at different times. However, we show that only few of these conflicts occur. By completely ignoring all parts that are contained in a conflict, we manage to analyze the busy phase.

Let $\e' > 0$, $k \in \N$, $p \in (0,1]$ and~$\zeta' \leq \frac{2^{-k}}{k}(2e)^{-\frac{2^{k-1}}{p^3(1+p)^{k-3}} - k - 1}.$ We will prove the following statement.

\begin{theorem}\label{1busyphase}
Let $t \in \N$ such that in our model at point $t$ we have $|N_{t}|\geq (p\e'\ln n)^2$ and $|I_{t}| \leq \zeta' n$.
For any $c >0$,
if we perform a busy phase of length $k$, then at the conclusion of this busy phase, the number of newly informed vertices satisfies with probability at least $1 - n^{-c}$ the inequality
\begin{equation*}|N_{t+k}|\geq p(1+p)^{k-2}|N_t|.\end{equation*}
\end{theorem}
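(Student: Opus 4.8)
The plan is to decompose the busy phase into $|N_t|$ \emph{partial processes}, one rooted at each newly informed vertex of $N_t$, to analyze a single (idealized, conflict-free) process exactly, and then to argue that the processes barely interact because conflicts are rare. Fix a source $v\in N_t$ and consider the idealized process in which every attempt made by a descendant of $v$ reaches a fresh, previously uninformed vertex. Writing $B_j$ for the number of vertices informed by this single process through round $t+j$, we have $B_0=1$, and in round $t+j$ each of the $B_{j-1}$ currently active vertices makes one independent attempt succeeding with probability $p$; hence $B_j=B_{j-1}+\mathrm{Bin}(B_{j-1},p)$ and $\E(B_j)=(1+p)^j$. The quantity that seeds the next phase is the number of vertices informed in the very last round, $Y_v:=B_k-B_{k-1}$, which is supported on $\{0,1,\dots,2^{k-1}\}$ (the upper bound being the worst case in which every attempt succeeds) and has $\E(Y_v)=p(1+p)^{k-1}$. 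This is the ``single process analyzed with moderate difficulty''.

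First I would establish the aggregate bound under the idealization. Giving each partial process its own independent randomness (independent first-neighbor choices and success bits), $Y:=\sum_{v\in N_t}Y_v$ is a sum of independent variables with $\E(Y)=|N_t|\,p(1+p)^{k-1}$. Applying the Chernoff bound (Lemma~\ref{thm:chernoff}) to $Y/2^{k-1}$ with a small constant deviation $\delta\le p/(1+p)$ gives
\[
\Pr\!\big(Y<(1-\delta)\E(Y)\big)\le\exp\!\big(-\delta^2|N_t|\,p(1+p)^{k-1}/2^{k}\big),
\]
and since $|N_t|\ge(p\e'\ln n)^2$ the exponent is $\Omega((\ln n)^2)$, which beats $n^{-c}$ comfortably. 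So in the conflict-free world the last round produces at least $(1-\delta)\,p(1+p)^{k-1}|N_t|$ vertices with the required probability.

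The crux is to transfer this to the true busy phase, where distinct partial processes and the set $I_t$ compete for the same vertices. Call a source $v$ \emph{clean} if no attempt anywhere in its idealized tree targets a vertex of $I_t$, a vertex targeted by any other source's tree, or a vertex targeted by another attempt of its own tree. For a clean source the real and idealized processes coincide vertex-for-vertex and round-for-round, so $v$ genuinely contributes its $Y_v$ last-round vertices to $N_{t+k}$, and contributions of different clean sources are to distinct vertices; hence $|N_{t+k}|\ge\sum_{v\ \mathrm{clean}}Y_v\ge Y-2^{k-1}C$, where $C$ is the number of non-clean sources. Each non-clean source is charged to a \emph{conflict} (an ordered pair of attempts aimed at a common vertex, or one attempt aimed into $I_t$), and the total number of attempts over all idealized trees is at most $|N_t|\,k\,2^{k}$. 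Since each individual attempt targets an essentially uniform neighbor, a fixed pair collides with probability $O(1/n)$ and a fixed attempt lands in $I_t$ with probability at most $|I_t|/n\le\zeta'$, so $\E(C)=O\big((|N_t|k2^{k})^2/n+|N_t|k2^{k}\zeta'\big)$. Here the hypotheses $|N_t|\le|I_t|\le\zeta'n$ and the rapidly decaying choice of $\zeta'$ enter decisively: they force this mean, and hence the loss $2^{k-1}C$, below $\delta'\,p(1+p)^{k-1}|N_t|$ for a small constant $\delta'$. Because the collision indicators are negatively correlated, the Chernoff bound for negatively correlated variables (Lemma~\ref{thm:chernoffnc}) then confines $C$ to this order with probability $1-n^{-c}$.

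Combining the two high-probability events, the surviving last-round vertices number at least $(1-\delta-\delta')\,p(1+p)^{k-1}|N_t|$, and choosing $\delta,\delta'$ with $(1-\delta-\delta')(1+p)\ge 1$ yields $|N_{t+k}|\ge p(1+p)^{k-2}|N_t|$, the single spare factor of $(1+p)$ absorbing both the concentration slack and the conflict loss. The main obstacle I anticipate is exactly the third step: arranging the clean/non-clean dichotomy so that it gives a genuine \emph{round-respecting} lower bound on the last-round count (not merely on the total informed), and bounding $C$ tightly enough that the loss stays within that lone factor of $(1+p)$ rather than eroding the leading constant. The $2^{k}$ worst-case size of a partial process is what makes both the Lipschitz-type fluctuations and the conflict estimate delicate, and is the reason the admissible density $\zeta'$ must be taken so small, with its stated dependence on $k$ and $p$.
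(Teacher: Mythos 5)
Your proposal reproduces the architecture of the paper's own proof almost exactly: the decomposition into per-source partial processes, the clean/conflict-free dichotomy, the loss of at most $2^{k-1}$ last-round vertices per unclean source, the expectation $p(1+p)^{k-1}$ for a conflict-free tree, the Chernoff bound applied after rescaling by $2^{k-1}$, and the final margin argument in which one factor of $(1+p)$ absorbs both the concentration slack and the conflict loss. The one step where you depart from the paper is, unfortunately, the one step that fails: the concentration of the number $C$ of unclean sources. You assert that ``the collision indicators are negatively correlated'' and invoke Lemma~\ref{thm:chernoffnc}. They are not. Inter-tree conflicts are \emph{symmetric}: a single collision between the trees of $u$ and $v$ makes \emph{both} sources unclean, so uncleanness indicators are positively correlated --- in the extreme case of two sources and $I_t=\emptyset$, the two indicators are the \emph{same} event, so $\Pr(\mbox{both unclean})=\Pr(\mbox{one unclean})\gg\Pr(\mbox{one unclean})^2$. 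Even the pairwise collision indicators fail the definition: for three attempts $a,b,c$ with independent near-uniform targets, writing $Z_{ab}$ for the indicator that $a$ and $b$ aim at the same vertex, one has $\Pr(Z_{ab}=Z_{bc}=Z_{ca}=1)=\Theta(n^{-2})$ while the product of the marginals is $\Theta(n^{-3})$. (In the quasirandom model matters are worse still: successive attempts of one vertex are consecutive list entries, hence fully dependent, and whether an attempt exists at all depends on the random tree growth.) Since Markov's inequality only yields $\Pr\left(C\ge \delta''|N_t|\right)=O(k^2 4^k\zeta'/\delta'')$, a constant, your argument as written does not reach the claimed $1-n^{-c}$ success probability.

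The gap is repairable, and it is instructive to see how the paper does it. The paper bounds $\Pr(D_i \mbox{ is conflicting})\le 2^{k+1}k\zeta$ in a way that is \emph{uniform over all outcomes of the randomness outside the $i$-th tree} (this uniformity is what licenses multiplying these bounds over any fixed set of sources), and then takes a union bound over all subsets $M\subseteq N_t$ with $|M|\ge q|N_t|$, where $q=p^3(1+p)^{k-3}/2^{k-1}$, obtaining the failure probability $2^{|N_t|}\left(2^{k+1}k\zeta\right)^{q|N_t|}\le e^{-|N_t|}\le n^{-c}$; this is exactly where the hypothesis $|N_t|\ge(p\e'\ln n)^2$ and the doubly-exponentially small choice of $\zeta'$ enter. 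Alternatively, staying within your own setup, you could replace the negative-correlation claim by the Azuma--McDiarmid inequality (Lemma~\ref{thm:azuma}): $C$ is a function of the $|N_t|$ independent per-tree random inputs, and changing one tree's randomness changes $C$ by at most $O(k2^k)$, so a deviation of $\delta''|N_t|$ has probability $\exp\left(-\Omega(|N_t|)\right)=n^{-\omega(1)}$. Either repair slots cleanly into the rest of your argument, which otherwise matches the paper's proof step for step.
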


\begin{proof}
Let $\zeta  := 2^{k}\zeta'.$
Let $t$ be such that we have $|N_{t}|\geq (p\e'\ln n)^2$ and $|I_{t}| \leq \zeta' n$. Note that the number of informed nodes at time $t+k$ can not exceed $\zeta n$, and this even if we consider a failure-free process between time $t$ and $t+k$ which we will do in the following.

Enumerate the nodes of $N_t$ from 1 to $|N_t|$. 
For each $i \in \{1, \dots, |N_t|\}$, we define the set of \emph{potential descendants of} $i$, denoted $P^{(k)}_i$, as the set of nodes which would be directly or indirectly informed within rounds $t+1,\dots,t+k$ by $i$ if no failures occured and only uninformed nodes were targeted. More general, for any node $v\in V$ and any $j \in [n-1]$ the set $P^{(j)}_v$ is recursively defined as follows.
Let $v_1$ be the randomly chosen neighbor of $v$ which $v$ attempts to contact in the round after being informed, and let $v_1,\dots,v_{n-1}$ be the list of $v$. Then
\begin{align*}
P^{(1)}_v &:= \{v_1\} \mbox{, and, for}\ \ j \in [n-2],\\
P^{(j+1)}_v &:= \{v_1,\dots,v_{j+1}\} \cup P^{(j)}_{v_1} \cup \dots \cup P^{(1)}_{v_j}.
\end{align*}
We say that $i$ is \emph{conflict-free} if the following two conditions hold.
	\begin{enumerate}
	\item $|P^{(k)}_i| = 2^k-1$ and
	\item $P^{(k)}_i\cap \left(P^{(k)}_1 \cup \dots \cup P^{(k)}_{i-1} \cup I_{t}\right) = \emptyset$. 
	\end{enumerate}	
Otherwise, we call $i$ \emph{conflicting}.

\begin{claim}\label{cl:1}
For any $i \in \{1, \dots , |N_t|\}$ and any set $M \subseteq [i-1]$
$$\Pr\left( i \mbox { is conflicting}\ |\ \ \forall m \in M: \ \ m \mbox { is conflicting}\right) \leq 2^{k+1} \zeta k .$$
\end{claim}

\begin{proof}
To bound the probability that the first condition of conflict-freeness fails, we impose an ordering on the random decisions of the vertices in $P^{(k)}_i\cup\{i\}$.
For every such decision $d$, the probability that $d$ creates a conflict with any previous decision, i.e., that the node in question attempts to contact a node that is targeted by a different node in $P^{(k)}_i\cup\{i\}$, is bounded from above by $\frac{k}{n-1}2^k$. 
So the probability that among all decisions a conflict is created is bounded from above by $\frac{k}{n-1}(2^k)^2$.

The probability that the second condition of conflict-freeness fails is
\begin{align*}
\Pr&\left(P^{(k)}_i\cap \left(P^{(k)}_1 \cup \dots \cup P^{(k)}_{i-1} \cup I_{t}\right) \neq \emptyset \right).
\end{align*}
Let $d$ be a random decision of a vertex in $P^{(k)}_i\cup\{i\}$.

 For all outcomes of random decisions up to round $t+k$ other than those of vertices in $P^{(k)}_i\cup\{i\}$, the probability that $d$ creates a conflict with any such decision can be \emph{uniformly} bounded from above by $\left(\zeta n - 2^k\right)\frac{k}{n-1}$. So 
\begin{align*}
\Pr&\left(P^{(k)}_i\cap \left(P^{(k)}_1 \cup \dots \cup P^{(k)}_{i-1} \cup I_{t}\right) \neq \emptyset \right)\\
&\leq \left|P^{(k)}_i\cup\{i\}\right|\cdot\left(\zeta n - 2^k\right)\tfrac{k}{n-1}\\
&\leq 2^k \left(\zeta n - 2^k\right)\tfrac{k}{n-1}.
\end{align*}

Using a union bound, the probability that $i$ is conflicting is bounded from above by
$
\frac{k}{n-1}(2^k)^2 + 2^k \left(\zeta n - 2^k\right)\frac{k}{n-1} 
= 2^k \zeta n \frac{k}{n-1}
\leq 2^{k+1} \zeta k.
$ 
\end{proof}

\begin{claim}\label{cl:2}
For any set $M \subseteq \{1, \dots , |N_t|\}$
$$ \Pr\left(\forall i \in M: \ \ i\mbox{ is conflicting}\right)
\leq \left(2^{k+1}k\zeta\right)^{|M|}.$$
\end{claim}

\begin{proof}
Let $M = \{m_1, \dots, m_{|M|}\}$ where $m_1 < \dots < m_{|M|}$. Then by \clref{1} one has
\begin{align*}
 &\Pr\left(m_1, \dots, m_{|M|}\mbox{ are conflicting}\right)\\
= &\prod_{i =1}^{|M|}\Pr\left(m_i\mbox{ is conflicting}\ |\ \  m_1, \dots, m_{i-1}\mbox{ are conflicting}\right)\\
\leq &\left(2^{k+1}k\zeta\right)^{|M|}.
\end{align*}

\end{proof}

Let $\oN_t :=\menge{i \in \{1, \dots , |N_t|\}}{i \mbox { is conflict-free}}.$
We bound the number of conflicting vertices from above using \clref{2}.
Let $q=\frac{p^3(1+p)^{k-3}}{2^{k-1}}$. Then the probability that there are at least $q|N_t|$ conflicting vertices is

\begin{align*}
&\sum_{\begin{subarray}
            1M\subseteq N_t \\ |M|\geq q|N_t|
           \end{subarray}} \Pr\left(\forall i \in M: \ \ i\mbox{ is conflicting}\right)
\ \leq \sum_{\begin{subarray}
            1M\subseteq N_t \\ |M|\geq q|N_t|
           \end{subarray}} \left(2^{k+1}k\zeta\right)^{|M|}\\ 
& \leq\ 2^{|N_t|}\left(2^{k+1}k\zeta	\right)^{q|N_t|}\
= \left(2\left(2^{k+1}k\zeta	\right)^{\frac{p^3(1+p)^{k-3}}{2^{k-1}}}	\right)^{|N_t|}\
\leq\ e^{-|N_t|}\
\leq\ n^{-c}
\end{align*}
for any $c>0$.

This means that we have
\begin{equation}\label{conflictfreeness}
|\oN_t| \geq (1-q)|N_t|
\end{equation}
with probability at least $1 - n^{-c}$.

We will now reconsider the actual, defective process. We will condition on \eqref{conflictfreeness} for the remainder of the proof. This means that we consider the random process split into two independent parts: First we run the failure-free process up to round $t+k$, then for every transmission in the failure-free process the biased coin is flipped to decide if the transmission takes place in the defective process.

Let the set of 
\emph{actual descendants of} $i$, denoted $D^{(k)}_i$, be the set of nodes which are directly or indirectly contacted within rounds $t+1,\dots,t+k$ by $i$ in the defective process. More precisely, any node $v \in P^{(k)}_i$ is in $D^{(k)}_i$
if all the contacting attempts on the path from $i$ to $v$ which the information
takes in the failure-free process are actually successful.

Let $X_i \subseteq D^{(k)}_i$ denote the set of vertices that are actual descendants of $i$ and are contacted in round $t+k$. 

We can say the following about the expectation of $X_i$ if $i$ is conflict-free.
\begin{claim}
For $i \in \oN_t$ we have
\begin{equation*}
\E\left(\left|X_i\right|\right)=p(1+p)^{k-1}.
\end{equation*}
\end{claim}

\begin{proof}
We prove this claim by induction on the number of rounds that have occurred. Assume that $i$ is conflict-free.
At time $t+1$, the probability that $i$ has successfully contacted a new node is $p$, and therefore the expected number of nodes informed by $i$ is $p$.
Let $j \in \{1, \dots ,k-1\}$ be such that for all $r \in \{1, \dots ,j\}$ the expected number of actual descendants of $i$ informed in round $t+r$ is $p(1+p)^{r-1}$.

Then the expected number of actual descendants of $i$ informed in round $t+j+1$ is 
	\begin{align*}
	p\left(1+ \sum_{r=1}^jp(1+p)^{r-1}\right)=p(1+p)^j.
	\end{align*}
\end{proof}
Let $X:=\sum_{i \in \oN_t}\left|X_i\right|$. 
Note that the random variables $|X_i|,  i \in \oN_t$ are independent and bounded by $2^{k-1}$. So we can use Hoeffding bounds (\thmref{Hoeffding}). As $q \leq \frac{p}{1+p}$ and $\E(X) = p(1+p)^{k-1}|\oN_t| \geq p(1+p)^{k-1}(1-q)|N_t|$, we have
\begin{align*}
\Pr \left(X < p(1+p)^{k-2}|N_t|\right)
&\leq	\Pr \left(X < \frac{\E(X)}{(1+p)(1-q)}\right)\\
&\leq 	\exp\left(-\frac{2(p-q-pq)^2\E(X)^2}{2^{2k-2}(1+p)^2(1-q)^2|\oN_t|}\right)\\
&\leq	\exp\left(-\frac{(p-q-pq)^2p^2(1+p)^{2k-4}|N_t|}{2^{2k-3}}\right)\\
&\leq n^{-c}
\end{align*}
for any $c>0$.

So with probability at least $1 - n^{-c}$,
\begin{align*}
|N_{t+k}| &\geq X \geq p(1+p)^{k-2}|N_t|.
\end{align*}
This proves Theorem \ref{1busyphase}.
\end{proof}

We also have to ensure that after the performance of one busy phase, a big enough fraction of the informed vertices is newly informed, since only the newly informed vertices are active in the next phase. We show that this holds in the proof of the following corollary.

\begin{cor}\label{OneBusyPhase2}
Define $t$ such that at point $t$ we have $|N_{t}|\geq (p\e'\ln n)^2$ and $|I_{t}|\leq \min\left\{\zeta' n , \frac{2^k-1}{p(1+p)^{k-2}-1}|N_{t}|\right\}$.
If we perform a busy phase of length $k$, then for any $c >0$ at the conclusion of this busy phase we have
\begin{equation*}
 |I_{t+k}|\leq \frac{2^k-1}{p(1+p)^{k-2}-1}|N_{t+k}|
\end{equation*}
with probability at least $1 - n^{-c}$.
\end{cor}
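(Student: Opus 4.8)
The plan is to control the ratio $|I_{t+k}|/|N_{t+k}|$ by bounding its numerator deterministically from above and its denominator from below via Theorem~\ref{1busyphase}. Writing $C := \frac{2^k-1}{p(1+p)^{k-2}-1}$ for the target constant (we are in the regime $p(1+p)^{k-2}>1$, so that $C>0$), the point is that $C$ is engineered so that these two bounds dovetail exactly: rearranging its definition gives $C\bigl(p(1+p)^{k-2}-1\bigr)=2^k-1$, equivalently
\begin{equation*}
C + (2^k-1) = C\,p(1+p)^{k-2}.
\end{equation*}
This single identity is essentially all the algebra the proof needs.

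First I would establish a worst-case upper bound on how many new vertices can appear during the busy phase. Starting from the $|N_t|$ active vertices, in each round every active vertex makes at most one contact, so the descendant set of a fixed $i\in N_t$ can at most double each round; hence $i$ acquires at most $1+2+\cdots+2^{k-1}=2^k-1$ descendants over the $k$ rounds (matching the $2^{k-1}$ descendants in the final round already used in the proof of Theorem~\ref{1busyphase}). Taking the union over the $|N_t|$ active vertices and adding the previously informed set gives the deterministic bound
\begin{equation*}
|I_{t+k}| \le |I_t| + (2^k-1)|N_t|,
\end{equation*}
valid irrespective of any conflicts, since overlaps only lower the true count.

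Next I would feed in the hypotheses. Using $|I_t|\le C|N_t|$ and then the identity above,
\begin{equation*}
|I_{t+k}| \le C|N_t| + (2^k-1)|N_t| = \bigl(C+(2^k-1)\bigr)|N_t| = C\,p(1+p)^{k-2}|N_t|.
\end{equation*}
The remaining hypothesis of Theorem~\ref{1busyphase}, namely $|I_t|\le\zeta' n$ together with $|N_t|\ge(p\e'\ln n)^2$, is supplied by the $\min$ in the corollary's assumption, so I may invoke it: with probability at least $1-n^{-c}$ we have $|N_{t+k}|\ge p(1+p)^{k-2}|N_t|$. On this event,
\begin{equation*}
|I_{t+k}| \le C\,p(1+p)^{k-2}|N_t| \le C|N_{t+k}|,
\end{equation*}
which is precisely the claimed inequality.

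The only genuinely probabilistic ingredient is this single application of Theorem~\ref{1busyphase}, so the corollary fails with probability at most $n^{-c}$; everything else is deterministic counting plus one line of algebra. I expect the only point requiring real care to be the doubling estimate: one must verify that each active vertex contacts at most one new vertex per round, so the per-round growth factor is at most $2$ and the geometric sum is exactly $2^k-1$ rather than something larger, since the whole argument rests on this constant matching the numerator of $C$.
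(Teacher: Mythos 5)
Your proof is correct and follows essentially the same route as the paper: bound $|I_{t+k}|$ deterministically by $|I_t|+(2^k-1)|N_t|$ via the per-round doubling of active vertices, apply the hypothesis $|I_t|\leq \frac{2^k-1}{p(1+p)^{k-2}-1}|N_t|$, and then use Theorem~\ref{1busyphase} once to lower-bound $|N_{t+k}|$, with the algebra collapsing exactly as you describe. The paper phrases the deterministic step as $|I_{t+k}|=|I_t|+\sum_{i=1}^k|N_{t+i}|$ with $|N_{t+i}|\leq 2^{i-1}|N_t|$, which is the same counting argument in a slightly different notation.
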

\begin{proof}
If we perform a busy phase of length $k$, by Theorem \ref{1busyphase} we get
\begin{align*}
 |I_{t+k}| &= |I_t|+\sum_{i=1}^k|N_{t+i}|\\
&\leq \frac{2^k-1}{p(1+p)^{k-2}-1}|N_{t}|+\sum_{i=1}^k2^{i-1}|N_t| \\
& = \frac{(2^k-1)p(1+p)^{k-2}}{p(1+p)^{k-2}-1}|N_t|\\
&\leq \frac{2^k-1}{p(1+p)^{k-2}-1}|N_{t+k}|
\end{align*}
with probability at least $1 - n^{-c}$ for any $c >0$.
\end{proof}

\subsubsection{Assembling of the Busy Phases.}

Now that we have analyzed a single busy phase, we can put these phases together to obtain a constant fraction of informed nodes.
Let $\e > 0$,~$p \in (0,1]$ and $$k := \frac{1+\e}{\e}\left(\log_{1+p}\frac{1}{p} + 2\right).$$
As in the previous section, let
$\zeta \leq \frac{1}{k}(2e)^{-\frac{2^{k-1}}{p^3(1+p)^{k-3}} - k - 1},$ 
and $\zeta' := 2^{-k}\zeta$. 
We show the following.

\begin{theorem}\label{allbusyphases}
Let $\e' > 0$. Let $t_2$ be such that in our model at point $t_2$ we have $|N_{t_2}|\geq (p\e'\ln n)^2$ and $|I_{t_2}|\leq \min\left\{\zeta' n , \frac{2^k-1}{p(1+p)^{k-2}-1}|N_{t_2}|\right\}$. 

Let $\ell$ denote the smallest integer such that if we perform $\ell$ busy phases with $k$ rounds we have $|I_{t_2+\ell k}| \geq \zeta' n.$

Then $$\ell \leq \frac{(1+\e)\log_{1+p}n}{k}$$
and
\begin{equation*}
|I_{t_2 + \ell k}| \leq \frac{2^k-1}{p(1+p)^{k-2}-1}|N_{t_2 + \ell k}|
\end{equation*}
hold with probability at least $1 - n^{-c}$ for any $c >0$.
\end{theorem}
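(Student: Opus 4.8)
The plan is to iterate the single-phase analysis. Write $g := p(1+p)^{k-2}$ for the per-phase growth factor supplied by Theorem~\ref{1busyphase}, and abbreviate $\beta := \frac{2^k-1}{p(1+p)^{k-2}-1}$ for the constant governing the invariant $|I_t|\leq\beta|N_t|$ of Corollary~\ref{OneBusyPhase2}. The entire argument rests on running busy phases one after another: at the start of each phase I would check that the two hypotheses $|N|\geq(p\e'\ln n)^2$ and $|I|\leq\min\{\zeta' n,\beta|N|\}$ are still in force, and then invoke Theorem~\ref{1busyphase} and Corollary~\ref{OneBusyPhase2} to push both the lower bound on $|N|$ and the invariant forward by one phase.

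Concretely, I would prove by induction on $m$ that, as long as $|I_{t_2+jk}|<\zeta' n$ at all intermediate indices, the bounds $|N_{t_2+mk}|\geq g^m|N_{t_2}|$ and $|I_{t_2+mk}|\leq\beta|N_{t_2+mk}|$ hold. The base case $m=0$ is exactly the hypothesis of the theorem. For the inductive step, while $|I_{t_2+mk}|<\zeta' n$ the invariant from the previous step certifies the hypotheses of both Theorem~\ref{1busyphase} and Corollary~\ref{OneBusyPhase2} (note $|N_{t_2+mk}|\geq g^m|N_{t_2}|\geq(p\e'\ln n)^2$ since $g>1$ for the calibrated $k$), so one application of each gives $|N_{t_2+(m+1)k}|\geq g^{m+1}|N_{t_2}|$ and $|I_{t_2+(m+1)k}|\leq\beta|N_{t_2+(m+1)k}|$, each failing with probability at most $n^{-c'}$ for a parameter $c'$ fixed below.

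Since $|I_t|\geq|N_t|$ always, the growth bound forces termination: once $g^m|N_{t_2}|\geq\zeta' n$ we get $|I_{t_2+mk}|\geq\zeta' n$, so $\ell$ is at most the least such $m$, namely about $\log_g(\zeta' n/|N_{t_2}|)$. Set $L:=\frac{(1+\e)\log_{1+p}n}{k}$. The decisive calculation is that $k$ is calibrated so that $g^{L}=n$, i.e.\ $\log_g n=L$: from $\ln g=\ln p+(k-2)\ln(1+p)$ and the definition $k=\frac{1+\e}{\e}\bigl(\log_{1+p}\frac1p+2\bigr)$ one obtains $\ln g=\frac{k\ln(1+p)}{1+\e}$, whence $L\ln g=\ln n$. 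Because we only need to climb from $|N_{t_2}|\geq(p\e'\ln n)^2$ up to $\zeta' n$, the threshold is crossed after roughly $\log_g n-\log_g\bigl((p\e'\ln n)^2/\zeta'\bigr)=L-\Theta(\log\log n)$ phases, so $\ell\leq L$ holds comfortably for large $n$, the $\Theta(\log\log n)$ slack absorbing any integer-rounding loss. Taking $c':=c+1$ and a union bound over the at most $L=O(\log n)$ phases makes the total failure probability at most $2L\,n^{-(c+1)}\leq n^{-c}$ for large $n$; on the complementary event both $\ell\leq L$ and, by a final application of Corollary~\ref{OneBusyPhase2} at phase $\ell-1$ (which is legitimate because minimality of $\ell$ gives $|I_{t_2+(\ell-1)k}|<\zeta' n$), the invariant $|I_{t_2+\ell k}|\leq\beta|N_{t_2+\ell k}|$.

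The main obstacle is twofold. First, $\ell$ is itself random, so the union bound cannot be taken naively ``over $\ell$ phases''; the fix is to union-bound over the deterministic budget of $L$ phases and to argue that the good event \emph{forces} termination within that budget. Second, and more essentially, the inequality $\ell\leq L$ is tight at the level of the leading constant and succeeds only because $k$ has been chosen so that $\ln g=\frac{k\ln(1+p)}{1+\e}$ exactly; verifying this identity and confirming that the lower-order $\log\log n$ contributions (from the polylogarithmic starting value $|N_{t_2}|$ and from the constant $\zeta'$) leave enough room to swallow the ceiling is where the genuine care lies.
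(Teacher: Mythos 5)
Your proposal is correct and follows essentially the same route as the paper: an inductive application of Theorem~\ref{1busyphase} to get the per-phase growth $|N_{t_2+mk}|\geq \left(p(1+p)^{k-2}\right)^m|N_{t_2}|$, an inductive application of Corollary~\ref{OneBusyPhase2} to maintain the invariant $|I_t|\leq \frac{2^k-1}{p(1+p)^{k-2}-1}|N_t|$, and the same calibration identity $k-2+\log_{1+p}p = \frac{k}{1+\e}$ to conclude $\ell\leq\frac{(1+\e)\log_{1+p}n}{k}$. If anything, your treatment is slightly more careful than the paper's in two spots: you union-bound over a deterministic budget of $L$ phases rather than over the random number $\ell$, and you bound $\ell$ via the crossing of the $\zeta' n$ threshold, whereas the paper compares the growth lower bound against the cruder upper bound $|I_{t_2+\ell k}|\leq \zeta n$.
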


\begin{proof} 
Since $p(1+p)^{k-2} \geq 1$ and $|N_{t_2}| \geq (p\e'\ln n)^2$, we can use Theorem \ref{1busyphase} inductively and obtain that for all $ s\in\{1,\dots,\ell\}$ we have

\begin{equation*}
|N_{t_2+sk}|\geq p(1+p)^{k-2}|N_{t_2+(s-1)k}|\geq\dots\geq  \left(p(1+p)^{k-2}\right)^s	|N_{t_2}|
\end{equation*}
with probability at least $1 - sn^{-c}$ for any $c >0$, and therefore

\begin{align*}
|I_{t_2+\ell k}|&\geq \left(p(1+p)^{k-2}\right)^{\ell},
\end{align*}
with probability at least $1 - \ell n^{-c}$ for any $c >0$.
Since $ n \geq \zeta n \geq |I_{t_2 + \ell k}|$, we have
\begin{align*}
\log_{1+p}n
&\geq 	\log_{1+p}\zeta n\\
&> \log_{1+p}\left(p(1+p)^{k-2}\right)^{\ell}\\
&=\ell(k-2+\log_{1+p}p),
\end{align*}
which implies that
$$
{\ell}\leq \frac{\log_{1+p}n}{k-2+\log_{1+p}p} = \frac{(1+\e)\log_{1+p}n}{k}$$
with probability at least $1 - n^{-c}$ for any $c >0$.

Since
$
|I_{t_2}| \leq \frac{2^k-1}{p(1+p)^{k-2}-1}|N_{t_2}|,
$
by an inductive application of Corollary \ref{OneBusyPhase2}, we get that 
\begin{equation*}
|I_{t_2 + \ell k}| \leq \frac{2^k-1}{p(1+p)^{k-2}-1}|N_{t_2 + \ell k}|
\end{equation*}
holds with probability at least $1 - n^{-c}$ for any $c >0$.
\end{proof}

\subsection{Second To Last Phase}

Now that we have a small constant fraction of newly informed nodes, a lazy phase of a constant number of rounds suffices to yield a large fraction of newly informed nodes.

\begin{lemma}\label{LazyPhase3}
Let $\e \in (0,1)$ and $k := \frac{1+\e}{\e}\left(\log_{1+p}\frac{1}{p} + 2\right)$.
Let $t_3$ be such that in our model at round $t_3$ we have $
|I_{t_3}| \leq \frac{2^k-1}{p(1+p)^{k-2}-1}|N_{t_3}|,$
and that there exist $\zeta, \zeta' \in (0,1)$ such that
 $\zeta'n \leq |I_{t_3}| \leq \zeta n$ holds.
Let $S:=\frac{2^k\ln(1/\zeta)}{p\zeta'}$.

 After one lazy phase of $S$ rounds starting at time $t_3$, at least $\left(1-3\zeta\right)n$ nodes will be newly informed with probability $1-e^{-\Omega(n)}$.
\end{lemma}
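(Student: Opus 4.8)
The plan is to treat the phase as a coupon-collector step and show that after $S$ rounds essentially every vertex uninformed at time $t_3$ has been reached. Since this is a \emph{lazy} phase, the only vertices that spread the rumor are the $|N_{t_3}|$ active vertices present at its start; each of them opens a window of $S$ consecutive neighbors from a uniformly random starting position on its cyclic list, and each transmission succeeds independently with probability $p$. First I would extract a lower bound on the number of active vertices from the hypotheses: combining $\zeta' n \le |I_{t_3}|$ with $|I_{t_3}| \le \frac{2^k-1}{p(1+p)^{k-2}-1}|N_{t_3}|$ gives $|N_{t_3}| \ge \frac{p(1+p)^{k-2}-1}{2^k-1}\zeta' n$, which is linear in $n$.

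Next I would fix a vertex $v$ uninformed at time $t_3$ and bound the probability that it remains uninformed. For a single active vertex $u$, the target $v$ lies in $u$'s window of $S$ consecutive neighbors with probability $S/(n-1)$, and the relevant transmission succeeds with probability $p$, so $u$ informs $v$ with probability $pS/(n-1)$. The decisive structural point is that the random starting positions and success indicators are independent across active vertices, so for fixed $v$ the events ``$u$ informs $v$'' are independent over $u \in N_{t_3}$, whence
$$\Pr(v \text{ uninformed after the phase}) = \Big(1 - \tfrac{pS}{n-1}\Big)^{|N_{t_3}|} \le \exp\!\Big(-\tfrac{pS|N_{t_3}|}{n}\Big).$$
Substituting the lower bound for $|N_{t_3}|$ and the value $S = \frac{2^k\ln(1/\zeta)}{p\zeta'}$ collapses the exponent to $\frac{2^k(p(1+p)^{k-2}-1)}{2^k-1}\ln(1/\zeta)$. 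Since $\e \in (0,1)$ forces $\frac{1+\e}{\e} > 2$ and hence $k > 2\log_{1+p}\frac1p + 4$, we get $p(1+p)^{k-2} \ge 4$, so the leading factor is at least $1$. Thus each such $v$ survives with probability at most $\zeta$, and the number $U$ of vertices uninformed at $t_3$ that remain uninformed satisfies $\E(U) \le (n-|I_{t_3}|)\zeta \le \zeta n$.

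The main obstacle is the concentration step, and specifically the choice of tail bound. It is tempting to write $U$ as a sum of survival indicators and invoke Lemma~\ref{thm:chernoffnc}, but these indicators are \emph{not} negatively correlated: within a single vertex's window the membership events for nearby targets are positively correlated, which pushes the survival indicators the wrong way. Instead I would apply Lemma~\ref{thm:azuma} to $U$ regarded as a function of the genuinely independent randomness, namely the $|N_{t_3}|$ starting positions together with the $|N_{t_3}|\cdot S$ success bits. Changing one starting position alters at most $2S$ targeted vertices and so changes $U$ by at most $2S$; changing one success bit changes $U$ by at most $1$. The crucial feature is that $S$, being assembled from the constants $k,p,\zeta,\zeta'$, does \emph{not} grow with $n$, so $\sum_i c_i^2 \le |N_{t_3}|(4S^2+S) = O(n)$. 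Taking deviation $t = \zeta n$ in Lemma~\ref{thm:azuma} then gives $\Pr(U \ge 2\zeta n) \le \Pr(U - \E(U) \ge \zeta n) \le 2\exp\!\big(-2\zeta^2 n^2/(|N_{t_3}|(4S^2+S))\big) = e^{-\Omega(n)}$.

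Finally I would assemble the two estimates. On the complementary event, which has probability $1 - e^{-\Omega(n)}$, the number of vertices newly informed during the phase is $(n - |I_{t_3}|) - U \ge n - \zeta n - 2\zeta n = (1-3\zeta)n$, using $|I_{t_3}| \le \zeta n$ and $U \le 2\zeta n$. This is exactly the claimed bound.
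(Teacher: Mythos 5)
Your proof is correct and takes essentially the same route as the paper's: the identical per-vertex survival estimate $\left(1-\tfrac{pS}{n-1}\right)^{|N_{t_3}|}\leq \zeta$, the same expectation bound, and the same Azuma (bounded-differences) concentration yielding $e^{-\Omega(n)}$. The only difference is cosmetic: the paper takes each active vertex's whole contact set as a single coordinate with Lipschitz constant $S$, while you split the randomness into starting positions and success bits with constants $2S$ and $1$; both give $\sum_i c_i^2 = O(n)$ since $S$ is constant.
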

\begin{proof}
We perform one lazy phase of $S$ rounds starting at time $t_3$. Let $v_0\in V\setminus I_{t_3}$. Then

\begin{align*}
\Pr(\mbox{no } &v\in N_{t_3}\mbox{ contacts }v_0\mbox{ in this phase})\ =\ \left(1-\frac{p}{n-1}\right)^{S|N_{t_3}|}\\
&\leq\ \exp\left(-\frac{pS|N_{t_3}|}{n-1}\right)\ 
\leq\ \exp\left(-\frac{Sp(p(1+p)^{k-2}-1)|I_{t_3}|}{2^k(n-1)}\right)\\
&\leq\ \exp\left(-\frac{Sp|I_{t_3}|}{2^k(n-1)}\right)\ 
\leq\ \exp\left(-\frac{Sp\zeta'}{2^k}\right)\
=\ \zeta.
\end{align*}
We now calculate the expected number of newly informed nodes after $S$ rounds. With $t_4:=t_3+S$ we have

\begin{align*}
\E\left(|N_{t_4}|\right) &= |V\setminus I_{t_3}|\cdot \Pr(v_0\in V\setminus I_{t_3}\mbox{ is informed in this phase})\\
&\geq (n-|I_{t_3}|)\left(1-\zeta\right)\\
&\geq \left(n-\zeta n\right)\left(1-\zeta\right)\\
&\geq (1-2\zeta)n.
\end{align*}

We will now use the Azuma-Inequality (see Lemma
\ref{thm:azuma}). Number the nodes of $N_{t_3}$ from 1 to $|N_{t_3}|$. 
Then for all $i\in\{1,\dots,|N_{t_3}|\}$, define the random variable $X_i$ as the set of vertices that $i$ contacts in the $S$ lazy rounds. 
Now we can define the function $f$ such that

\begin{equation*}
f(X_1,\dots,X_{|N_{t_3}|}) := \left|\bigcup_{i=1}^{|N_{t_3}|}X_i\setminus I_{t_3}	\right|=|N_{t_4}|.
\end{equation*}

By this definition, we see that

\begin{equation*}
f(x_1,\dots,x_i,\dots,x_{|N_{t_3}|})-f(x_1,\dots,x_i',\dots,x_{|N_{t_3}|})\leq S.
\end{equation*}

Therefore, we can calculate the probability that we inform less than $(1-3\zeta)n$ vertices in this phase.

\begin{align*}
\Pr&\left(|N_{t_4}|<\left(1-3\zeta\right)n\right)\ =\ \Pr\left(|N_{t_4}|<(1-2\zeta)n-\zeta n\right)\\
&\leq \Pr\left(\left|N_{t_4}-\E(|N_{t_4}|)\right|\geq	\zeta n\right)
\ \leq\ 2\exp\left(-\frac{2\zeta^2n^2}{\sum_{i=1}^{|N_{t_3}|}S^2}\right)\\
&\leq 2\exp\left(-\frac{2\zeta^2n^2}{\zeta n S^2}\right)
\ =\ e^{-\Omega(n)}.
\end{align*}
\end{proof}


\subsection{The Final Phase}

The last phase of the protocol is again a lazy phase. We now use the large fraction of newly informed nodes from the previous phase to inform the few remaining nodes.
\begin{lemma}\label{LazyPhase4}
Let $\e \in (0,1)$ and $\eta \leq \frac{\e}{4}$.
Let $t_4$ be such that in our model at round $t_4$ we have $
|N_{t_4}| \geq \left(1-\eta\right)n.$
 After one lazy phase of $\frac{(3+\e)}{3p}\ln n$ rounds starting at time $t_4$, all the nodes will be informed with probability $1-O(n^{-\e(1-\e)/12})$.
\end{lemma}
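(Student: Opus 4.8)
The plan is to analyze each uninformed vertex separately and finish with a union bound. Since $N_{t_4}\subseteq I_{t_4}$, the hypothesis $|N_{t_4}|\geq (1-\eta)n$ forces at most $\eta n$ vertices to be uninformed at time $t_4$. In a lazy phase only the nodes of $N_{t_4}$ are active, and each such node contacts the $S:=\frac{(3+\e)}{3p}\ln n$ consecutive neighbors from its cyclic list beginning at its single uniformly random choice, every attempt succeeding independently with probability $p$. The key observation is that, because the starting offset is uniform over all $n-1$ neighbors and (for $n$ large) $S\leq n-1$ so the $S$ contacted neighbors are distinct, any fixed vertex $v$ lies in the window of $S$ neighbors addressed by an active node $i$ with probability exactly $\frac{S}{n-1}$, irrespective of the adversarial ordering of $i$'s list. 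Multiplying by the success probability gives $\Pr(i\text{ informs }v)=\frac{pS}{n-1}$.

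First I would fix an uninformed vertex $v$. The random first choices, the list orderings, and the transmission successes of the distinct active nodes are mutually independent, so the events ``$i$ does not inform $v$'' are independent over $i\in N_{t_4}$, whence
$$\Pr(v\text{ stays uninformed})=\left(1-\frac{pS}{n-1}\right)^{|N_{t_4}|}.$$
Applying $1-x\leq e^{-x}$, the bound $|N_{t_4}|\geq(1-\eta)n$, and $\frac{n}{n-1}>1$, and recalling that $pS=\frac{3+\e}{3}\ln n$, this is at most $\exp\!\left(-pS(1-\eta)\right)=n^{-(3+\e)(1-\eta)/3}$.

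Next I would insert the hypothesis $\eta\leq\frac{\e}{4}$ to clean up the exponent: a one-line computation gives $\frac{(3+\e)(1-\eta)}{3}\geq\frac{(3+\e)(1-\e/4)}{3}=1+\frac{\e(1-\e)}{12}$, so each uninformed vertex survives this phase with probability at most $n^{-1-\e(1-\e)/12}$. A union bound over the at most $\eta n\leq\frac{\e}{4}n$ uninformed vertices then yields
$$\Pr(\text{some vertex uninformed})\leq \tfrac{\e}{4}\,n\cdot n^{-1-\e(1-\e)/12}=O\!\left(n^{-\e(1-\e)/12}\right),$$
which is exactly the stated failure probability.

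I expect the only genuinely delicate point to be justifying the exact per-vertex contact probability $\frac{S}{n-1}$ and its independence of the adversarially chosen cyclic lists: this rests on the uniform random starting offset together with $S\leq n-1$ (valid for large $n$), ensuring the $S$ addressed neighbors are distinct and that $v$'s membership in the window depends only on the offset. Everything else — the independence across active nodes, the elementary per-vertex estimate (no Chernoff bound is needed here), and the arithmetic simplification of the exponent — is routine.
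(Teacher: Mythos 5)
Your proposal is correct and follows essentially the same route as the paper's own proof: the per-active-node contact probability $\frac{pS}{n-1}$, independence across the nodes of $N_{t_4}$ giving the product bound $\left(1-\frac{pS}{n-1}\right)^{|N_{t_4}|}$, the estimate $1-x\leq e^{-x}$ with $|N_{t_4}|\geq(1-\eta)n$, the exponent computation $(3+\e)(1-\e/4)/3 = 1+\e(1-\e)/12$, and a union bound over the at most $\eta n$ uninformed vertices. Your explicit justification of the window probability $\frac{S}{n-1}$ (via the uniform starting offset and $S\leq n-1$) is a welcome elaboration of a step the paper leaves implicit, but it is not a different argument.
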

\begin{proof}
We will perform one lazy phase of $\frac{(3+\e)}{3p}\ln n$ rounds starting at time $t_4$.
Let $v_0\in V\setminus I_{t_4}$. Then
\begin{align*}
\Pr(\mbox{no } &v\in N_{t_4}\mbox{ contacts }v_0\mbox{ in this phase})\ 
=\ \left(1-\frac{p} {n-1} \right)^{(3+\e)|N_{t_4}|\ln n/3p}\\
&\leq\ \exp\left(\frac{-(3+\e)\ln n|N_{t_4}|}{3(n-1)}\right)
\ \leq\ \exp\left(\frac{-(3+\e)(1-\eta)n\ln n}{3(n-1)}\right)\\
&\leq\ \exp\left(-\frac{(3+\e)(1-\eta)\ln n}{3}\right)\ 
\leq\ n^{-\left(1+\e(1-\e)/12\right)}.
\end{align*}
So the probability that all the nodes become informed is
\begin{align*}
\Pr(\forall &v\in V\setminus I_{t_4}:\ v \mbox{ becomes informed})\\&=1-\Pr(\exists v\in V\setminus I_{t_4}:\ v\mbox{ does not get informed})\\
&\geq 1-\sum_{v\in V\setminus I_{t_4}}\Pr(v\mbox{ does not get informed})\\
&\geq 1-\eta n n^{-\left(1+\e(1-\e)/12\right)}\
=\ 1-O(n^{-\e(1-\e)/12}).
\end{align*} 
\end{proof}

\subsection{Proof of Theorem \ref{thm:upper}}

Let $\e \in (0,1)$,~$p \in (0,1]$ and $k := \frac{1+\e}{\e}\left(\log_{1+p}\frac{1}{p} + 2\right)$.
Furthermore, let 
$\zeta := \min\left\{ \frac{1}{k}(2e)^{-\frac{2^{k-1}}{p^3(1+p)^{k-3}} - k - 1}, \frac{\e}{12}\right\}$ and 
$\zeta' := 2^{-k}\zeta.$

We start a delayed quasirandom rumor spreading protocol with message success probability $p$ and with one initially informed vertex. We first perform one lazy phase of length $t_1:= \frac{1}{2}\e\ln n$. By Lemma \ref{LazyPhase1} this yields that
$|N_{t_1}| \geq \frac{1}{3}p\e\ln n$ holds with probability at least $1-n^{-p\e/36}$. 
Of course, after one lazy phase of length $t_1$ we have with probability one $|N_{t_1}| \leq t_1$ and $|I_{t_1}| \leq t_1 + 1$. So we can apply Lemma \ref{LazyPhase2} and get $|N_{t_2}| \geq \left(\frac{1}{3}p\e\ln n\right)^2$, this phase succeeds with probability at least $1-n^{-\gamma}$ for any $\gamma \in [0, 1)$. 
Furthermore we have with probability one
$|I_{t_2}| \leq \left(\frac{1}{2}\e\ln n\right)^2 + \frac{1}{2}\e\ln n + 1 \leq \zeta' n$
as well as $|I_{t_2}| = |I_{t_1}| + |N_{t_2}| \leq \frac{2^k-1}{p(1+p)^{k-2}-1}|N_{t_2}|$ for any sufficiently large $n$. 
So we can apply Theorem \ref{allbusyphases} with $\e' := \frac{\e}{3}$. This gives us an $\ell \leq \frac{(1+\e)\log_{1+p}n}{k}$ such that if we set $t_3:= t_2 + \ell k$, then for any $c >0$ we have with probability at least $1-n^{-c}$ 
$$\zeta' n \leq |I_{t_3}| \leq \zeta n \ \ \ \ \mbox{   and    }\ \ \ \ |I_{t_3}| \leq \frac{2^k-1}{p(1+p)^{k-2}-1}|N_{t_3}|.$$
So with probability at least $1-n^{-c}$ the preconditions of Lemma \ref{LazyPhase3} are fulfilled. Therefore, if we set $S:=\frac{2^k\ln(1/\zeta)}{p\zeta'}$ and $t_4 := t_3 + S$, we get
$|N_{t_4}| \geq \left(1-3\zeta\right)n$ with probability at least $1-n^{-c}$.
We can consequently apply Lemma \ref{LazyPhase4} with $\eta := 3\zeta$. We conclude that after $\frac{(3+\e)}{3p}\ln n$ more rounds all the nodes will be informed with probability  $1-O(n^{-\e(1-\e)/12})$.

Overall, we perform at most
\begin{equation*}
\tfrac{1}{2}\e\ln n+(1+\e)\log_{1+p} n+ S+\tfrac{3+\e}{3p}\ln n\leq (1+\e)\left(\tfrac{1}{p}\ln n+\log_{1+p} n\right)
\end{equation*}
rounds in our delayed quasirandom rumor spreading protocol with message success probability $p$.

The overall failure probability is at most
$$n^{-p\e/36} + n^{-\gamma} + n^{-c} + e^{-\Omega(n)} + O(n^{-\e(1-\e)/12}) \leq n^{-p\e/40}.$$

\subsection{Upper Bound for the Fully Random Protocol}

The proof of the upper bound provided in Theorem~\ref{thm:upper} can easily be modified to yield the corresponding bound for the classical randomized rumor spreading protocol.

\begin{theorem}\label{thm:upperrandom}
For every $\e>0$ and~$p \in (0,1]$, the number of rounds we need to inform all the nodes of the complete graph on $n$ vertices using the fully random rumor spreading model with transmission failure rate $1-p$ is
\begin{equation*}
 (1+\e)\left(\log_{1+p}n+\tfrac{1}{p}\ln n\right)
\end{equation*} 
with probability at least $1-n^{-p\e/40}$.
\end{theorem}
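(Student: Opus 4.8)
The plan is to reuse the five-phase decomposition of the proof of Theorem~\ref{thm:upper} — the two opening lazy phases, the block of busy phases, and the two closing lazy phases — essentially unchanged, since the only structural difference between the two models is that in the fully random protocol each informed vertex chooses a uniformly random neighbor \emph{independently in every round}, rather than walking through a fixed cyclic list after one random start. On the complete graph this means a single vertex may now contact the same neighbor more than once during a phase. The whole argument therefore hinges on showing that such self-collisions are so rare that they can be absorbed into the failure probabilities already present, leaving the leading-order round count $(1+\e)(\log_{1+p}n+\tfrac1p\ln n)$ untouched. The governing observation is that during any single phase each active vertex makes at most $O(\ln n)$ random choices (at most $\tfrac12\e\ln n$, $S$, or $\tfrac{3+\e}{3p}\ln n$ in the lazy phases, and only the constant $k$ in a busy phase), so by a union bound the probability that a fixed vertex ever repeats a target within a phase is $O((\ln n)^2/n)$, which is negligible.

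First I would re-prove the two opening lazy phases (Lemmas~\ref{LazyPhase1} and \ref{LazyPhase2}). Here the number of \emph{distinct} newly informed vertices equals the number of successful transmissions minus the number that land on an already-contacted target. In the fully random model the count of successful transmissions from one vertex over a phase of length $L$ is still $\mathrm{Bin}(L,p)$, so the Chernoff lower bounds on $|N_{t_1}|$ and $|N_{t_2}|$ are unchanged; one only subtracts the number of repeated targets, which is $0$ with probability $1-O((\ln n)^2/n)$ by the observation above. Thus both lemmas hold with the same bounds and an extra, negligible, additive failure term.

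The crux is the busy-phase engine, Theorem~\ref{1busyphase}, where the cyclic-list structure was implicitly used to guarantee that each active descendant contacts a fresh neighbor. I would handle this by enlarging the definition of a \emph{conflicting} set $D_i$: in addition to the two conditions already stated, declare $D_i$ conflicting whenever some vertex of $D_i\cup\{i\}$ contacts the same target in two different rounds of the phase. Because each such vertex makes at most $k$ choices and $|D_i\cup\{i\}|\le 2^k$, this adds at most $2^k\binom{k}{2}/(n-1)$ to the per-set conflict probability, which for large $n$ is dominated by the existing bound $2^{k+1}k\zeta$; hence the estimate on the number of conflicting sets, and with it the $n^{-c}$ guarantee, is preserved. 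Conditioned on $D_i$ being conflict-free in this stronger sense, every contact of an active descendant hits a fresh, uninformed vertex and succeeds with probability exactly $p$, so the branching recursion $\E(|X_i|\mid D_i\text{ conflict-free})=p(1+p)^{k-1}$ of the Claim holds identically, as does the independence of the conditioned variables $(|X_i|\mid D_i\text{ conflict-free})$ and the subsequent Chernoff step. Consequently $|N_{t+k}|\ge p(1+p)^{k-2}|N_t|$ survives, and with it Corollary~\ref{OneBusyPhase2} and the assembly in Theorem~\ref{allbusyphases}.

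Finally, the two closing lazy phases (Lemmas~\ref{LazyPhase3} and \ref{LazyPhase4}) transfer essentially without change. The probability that a fixed active vertex fails to contact a given uninformed $v_0$ over a phase of length $L$ is $(1-\tfrac{p}{n-1})^{L}$ in the fully random model, which is at most $e^{-pL/(n-1)}$, exactly the quantity reached in both proofs once the quasirandom factor $(1-\tfrac{pL}{n-1})$ is exponentiated. The expectation computations and the final union bound of Lemma~\ref{LazyPhase4} are then literally the same, and in Lemma~\ref{LazyPhase3} the bounded-difference constant $c_i=S$ feeding Azuma's inequality (Lemma~\ref{thm:azuma}) still holds, since an active vertex contacts at most $S$ vertices during the phase. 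Summing the round counts and the failure probabilities as in the proof of Theorem~\ref{thm:upper} then yields the claimed $n^{-p\e/40}$ bound. I expect the busy-phase conflict bookkeeping to be the only step demanding genuine care; everything else is a routine re-reading of the quasirandom argument with independent per-round choices in place of list traversal.
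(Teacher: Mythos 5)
Your proposal is correct and follows essentially the same route as the paper: the paper's own (sketched) proof of Theorem~\ref{thm:upperrandom} simply reapplies the phase-by-phase argument of Theorem~\ref{thm:upper}, observing that the only new issue in the fully random model is that a vertex may contact the same neighbor more than once, which within $t$ rounds happens with probability only $\Theta(t^2/n)$ and is therefore absorbed into the existing failure probabilities. Your treatment of this collision issue (binomial transmission counts in the lazy phases, an enlarged notion of conflicting descendant sets in the busy phases, and the identical exponential bounds in the closing phases) is exactly the reapplication the paper has in mind, worked out in more detail than the paper itself provides.
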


We omit a formal proof, since all the necessary arguments are used in the proof of Theorem~\ref{thm:upper} and can be reapplied to fit our needs. Much of our reasoning can be simplified significantly since we do not have to cope with the dependencies present in the quasirandom model. 

The only aspect in which the quasirandom model is superior to the random model is the first phase. Here we have the advantage that one vertex does not inform a neighbor more than once in time shorter than its degree. However, for the fully random model the chance that a vertex informs one or more neighbors multiple times within $t$ rounds is only $\Theta(t^2/n)$. Hence the benefits of the quasirandom model are minimal.

\section{Conclusion}

In this paper, we present the first precise results pertaining to the robustness of randomized rumor spreading and its quasirandom variant up to constant factors. We showed that if the network topology is a complete graph on $n$ vertices and each transmission only reaches its destination with probability $p$, then after $(1+\e)\left(\frac{1}{p}\ln n+\log_{1+p} n\right)$ rounds, the quasirandom protocol will have informed all nodes in the graph with probability at least $1-n^{-p\e/40}$. For $p = 1$, this result coincides with the known sharp bound of $(1 + o(1)) (\log_2 n + \ln n)$. We also showed that the robustness of the quasirandom model is at least as good as that of the classical protocol. 

This work prompts the question: do we observe similar robustness behavior for other network topologies? We believe that for most natural network topologies, the quasirandom variant is as robust as the fully random one. Unfortunately, no other precise analysis up to leading constants of the run-time of non-trivial graph classes has been published, not even for the fully random model. Hence, to fully understand robustness, it is necessary to first conceive of a precise analysis of the classical model, e.g., on hypercubes or on random graphs.
}

\bibliographystyle{alpha}
\bibliography{lit}
\end{document}